  \newtheorem{proposition}{Proposition}
  \newtheorem{remark}{Remark}
  \newcommand{\figref}[1]{\figurename~\ref{#1}}
\DeclareMathOperator*{\argmax}{arg\,max}
\DeclareMathOperator{\diag}{diag}
\begin{document}
\title{Intelligent Reflecting Surface Configurations for Smart Radio Using Deep Reinforcement Learning}
\author{Wei~Wang,~\IEEEmembership{Member,~IEEE}, and~Wei~Zhang,~\IEEEmembership{Fellow,~IEEE}
\thanks{

This work was supported in part by National Key R\&D Program of China under Grant 2020YFA0711400,  Shenzhen Science \& Innovation Fund under Grant JCYJ20180507182451820, and the Australian Research Council's Project funding scheme under LP160101244.

W. Wang is with Peng Cheng Laboratory, Shenzhen, China (e-mail: wangw01@pcl.ac.cn).

W. Zhang is with School of Electrical Engineering and Telecommunications, The University of New South Wales, Sydney, NSW 2052, Australia (e-mail: w.zhang@unsw.edu.au).}
}

\maketitle

\begin{abstract}
Intelligent reflecting surface (IRS) is envisioned to change the paradigm of wireless communications from ``adapting to wireless channels" to ``changing wireless channels". However, current IRS configuration schemes, consisting of sub-channel estimation and passive beamforming in sequence, conform to the conventional model-based design philosophies and are difficult to be realized practically in the complex radio environment.  To create the smart radio environment, we propose a model-free design of IRS control that is independent of the sub-channel channel state information (CSI) and requires the minimum interaction between IRS and the wireless communication system. We firstly model the control of IRS as a Markov decision process (MDP) and  apply deep reinforcement learning (DRL) to perform real-time coarse phase control of IRS. Then, we apply extremum seeking control (ESC) as the fine phase control of IRS.  Finally, by updating the frame structure, we integrate DRL and ESC in the model-free control of IRS to improve its adaptivity to different channel dynamics. Numerical results show the superiority of our proposed  joint DRL and ESC scheme and verify its effectiveness in model-free IRS control without sub-channel CSI.
\end{abstract}


\section{Introduction}
Metasurfaces, which consist of artificially periodic or quasi-periodic structures with sub-wavelength scales, are a new design of functional materials \cite{engheta2006metamaterials, cui2014coding}. Some extraordinary electromagnetic properties observed on metasurfaces, e.g., negative permittivity and permeability, reveal its potential in tailoring electromagnetic waves in a wide frequency range, from microwave to visible light \cite{schurig2006metamaterial,liang2015anomalous, yu2011light,ChiIRS}. Intelligent reflecting surface (IRS), a.k.a. reconfigurable intelligent surface (RIS), is a type of programmable metasurfaces that is capable of electronically tuning electromagnetic wave by incorporating active components into each unit cell of metasurfaces \cite{di2019smart, wu2021intelligent, IRSch1, wang2021joint, CSirs, wu2019intelligent}. The advent of IRS is envisioned to revolutionize many industries, a major one of which is wireless communications.

Wireless communications are subject to the time-varying radio propagation environment. The effects of free space path loss, signal absorption, reflections, refractions, and diffractions caused by physical objects during the propagation of electromagnetic waves jointly render wireless channels highly dynamic \cite{tse2005fundamentals, qi2021acquisition, liu2021atmospheric}. IRS's capability of manipulating electromagnetic waves in a real time manner brings infinite possibilities to wireless communications and makes it possible for human beings to transform the design paradigm of wireless communications from ``adapting to wireless channels" to ``changing wireless channels"  \cite{di2019smart}. To this end, great research efforts have been spent to acquire the channel state information (CSI) of the sub-channels, i.e., the channels between wireless transceivers and IRS, which is widely regarded as the prerequisite of IRS reflection pattern  (passive beamforming) design \cite{wu2021intelligent}. Owing to the passive nature, IRS is unable to sense the incident signal, thereby rendering the estimation process far more complicated than traditional wireless communication systems.  In \cite{IRSch1}, a channel estimation scheme with reduced training overhead is proposed by exploiting the inter-user channel correlation. In \cite{wang2021joint,CSirs}, compressed sensing based channel estimation methods are proposed to estimate the channel responses between base station, IRS and a single-antenna user at mmWave frequency band. As the proposed schemes are focusing on a single-antenna user, their extension to multiple users with array antenna might increase multi-fold the training overhead. In \cite{WeiWang2021joint}, a joint beam training and positioning scheme is proposed to estimate the parameters of the line-of-sight (LoS) paths for IRS assisted mmWave communications. The proposed random beamforming in the training stage is performed in a broadcasting manner and thus the training overhead is independent of the user number.

Despite the aforementioned endeavors to advance the CSI acquisition techniques in IRS assisted wireless communications, the practical applications of IRS are still confronting various challenges. Firstly, channel estimations for IRS assisted wireless communications demand a radical update of the existed protocols to incorporate the coordination of the transmitter, the receiver, and the IRS. It indicates that the existing wireless systems, e.g., Wi-Fi, 4G-LTE and 5G-NR, are unable to readily embrace IRS. Secondly,  even if the perfect CSI is available, the real-time optimization of IRS reflection coefficients using convex and non-convex optimization techniques is computationally prohibitive \cite{wu2019intelligent}. Thirdly,  current solutions to IRS control, which consists of CSI acquisition and IRS reflection  designs, are based on the accurate modelling of IRS.  However, as a type of low-cost reflective metasurfaces, IRS changes its reflection coefficient via tuning the impedance, the exact value of which is dependant on the carrier frequency of the incident signal \cite{yang2016programmable, tang2019wireless}. The carrier frequency can be shifted by Doppler effects and might also vary over different users, thus the mathematical modelling of IRS in the complex radio propagation environment is inherently difficult.

To tackle the aforementioned challenges, we follow the design paradigm of model-free control by treating the wireless communication system as a (semi) black box with uncertain parameters and to optimize  reflection coefficients of IRS through deep reinforcement learning (DRL) and extremum seeking control (ESC). Compared with the prevailing designs of IRS assisted wireless communication systems \cite{CSirs,JointActivePassive,yang2019irs,yu2021irs, zhi2021ergodic, han2019large, zhi2021ris, zhi2021two, zhao2020intelligent}, our proposed scheme is model-free. Specifically, the instantaneous (or statistical) CSI of sub-channels (i.e., Tx-Rx channel, Tx-IRS channel, and IRS-Rx channel) that constitutes the equivalent wireless channel is not required. Our design, in a true sense, treats IRS as a part of the wireless channel and requires the minimum interaction with wireless communication systems. The disentanglement of IRS configuration from wireless communication system means the  improved independence of IRS and will speed up the rollout of IRS  in the future.  There are already some attempts towards the standalone operation of IRS \cite{taha2020deep, taha2021enabling, sheen2021deep}.  In \cite{taha2020deep, taha2021enabling}, deep learning and deep reinforcement learning are applied to guide the IRS to interact with the incident signal given the knowledge of the sampled channel vectors. However, in order to obtain the CSI of the Tx-IRS and IRS-Rx sub-channels, the authors propose to install channel sensors on the IRS, which is, to some extent, against the initial role of IRS as a passive device. In \cite{sheen2021deep}, to reduce the dependance on the CSI of sub-channels, a deep learning scheme is proposed to extract the interactions between phase shifts of IRS and receiver locations. However, the training data has to be collected offline, which limits its adaptability to the more general scenarios.

In this paper, our objective is to build a model-free  IRS control scheme with a higher level understanding of the radio environment, which is able to configure the IRS reflection coefficients without the CSI of the sub-channels. To this end, we adopt a typical scenario, i.e., time-division duplexing (TDD) multi-user multiple-input-multiple-output (MIMO), as an example to perform our design.  To summarize, our contributions are as follows.

\begin{itemize}
\item We model the control of IRS as a Markov decision process (MDP) and then apply DRL, specifically, double deep Q-network (DDQN) method, to perform real-time coarse phase control of IRS. The proposed DDQN scheme outperforms the other sub-channel-CSI-independent methods, e.g., multi-armed bandit (MAB), random reflection.
\item To enhance the action of DDQN, we further apply ESC as the fine phase control of IRS. Specifically, we propose a dither-based iterative method to optimize the phase shift of IRS through trial and error. We also prove that the output of the proposed dither-based iterative method is  monotonically increasing.
\item By updating the frame structure, we integrate DRL and ESC in the model-free control of IRS. The integrated scheme is more adaptive to various channel dynamics and has the potential to achieve better performance.
\end{itemize}

Numerical results show the superiority of our proposed  DRL, ESC, and joint DRL and ESC scheme and verify their effectiveness in model-free IRS control without sub-channel CSI.

The rest of the paper is organized as follows. In Section II, we introduce the system model.  In Section III, we propose a DRL enabled model-free control of IRS. In Section IV,
we propose a dither-based iterative method to enhance the action of DRL. In Section V, we present numerical results. Finally, in Section VII, we draw the conclusion.

{\em{Notations:\quad}} Column vectors (matrices) are denoted by bold-face lower (upper) case letters, $\mathbf{x}[n]$ denotes the $n$-th element in the vector $\mathbf{x}$,  $\odot$ represents the Hadamard product, $(\cdot)^*$, $(\cdot)^T$ and $(\cdot)^{H}$  represent conjugate, transpose and  conjugate transpose operation, respectively.

\section{System Model}

In this section, we introduce the system model of model-free IRS control.

\subsection{Optimal Phase Shift Vector of IRS}

\begin{figure}[tp]{
\begin{center}{\includegraphics[ height=5cm]{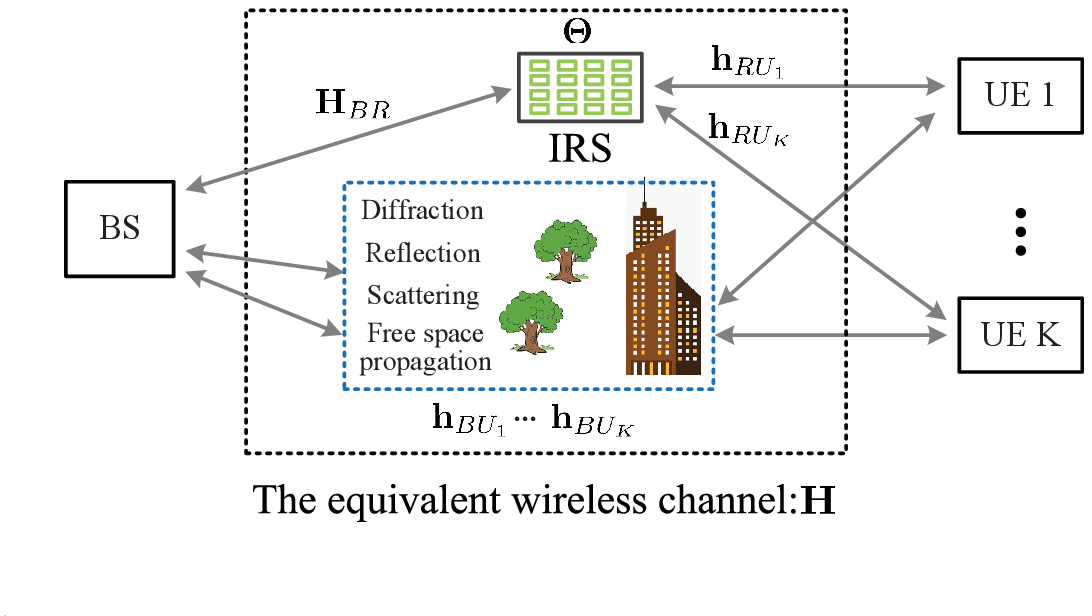}}
\caption{{ The illustration of IRS assisted wireless communications} }\label{SystemModel}
\end{center}}
\end{figure}

For IRS assisted wireless communications, the channel model between BS and a certain user $k$ can be represented as
\begin{align}
\mathbf{h}_k = \mathbf{h}_{BU_k} + \mathbf{H}_{BR}\boldsymbol{\Theta}\mathbf{h}_{RU_k}
\end{align}
where the user $k$ is equipped with a single antenna, $\mathbf{h}_{BU_k}  \in \mathbb{C}^{N_B \times 1} $ is the channel response vector between the user $k$ and BS,  $\mathbf{H}_{BR}  \in \mathbb{C}^{N_B \times N_R} $ is the channel response matrix between BS and IRS, $ \mathbf{h}_{RU_k} \in \mathbb{C}^{N_R \times 1}$ is the channel response vector between the user $k$ and IRS, $\boldsymbol{\Theta} = \diag \{\boldsymbol\theta\}$, and  $\boldsymbol\theta \in \mathbb{C}^{N_{R}\times 1} $ (with $|\boldsymbol\theta(n)|=1$) is the  phase shift vector of the IRS. 
Accordingly, the multi-user channel  is  written as
\begin{align}
\mathbf{H} = \mathbf{H}_{BU} + \mathbf{H}_{BR} \boldsymbol{\Theta} \mathbf{H}_{RU}
\end{align}
where
\begin{align}
 &\mathbf{H}= [\mathbf{h}_1, \mathbf{h}_2, \cdots, \mathbf{h}_K] \notag \\
 & \mathbf{H}_{BU}= [\mathbf{h}_{BU_1}, \mathbf{h}_{BU_2}, \cdots, \mathbf{h}_{BU_K}] \notag \\
 &\mathbf{H}_{RU} = [\mathbf{h}_{RU_1}, \mathbf{h}_{RU_2}, \cdots, \mathbf{h}_{RU_K}]\notag
\end{align}
{And the relationship between the aggregated equivalent channel $\mathbf{H}$ and the sub-channels is shown in \figref{SystemModel}.}

The objective of reinforcement learning based IRS configuration is to develop a widely compatible method that can be deployed in various scenarios of wireless communications without any knowledge of the wireless system's internal working mechanism. Mathematically, the problem is formulated as
\begin{align}  \label{Opt}
\begin{split}
&\max_{\boldsymbol{\theta}} \;\; P_m    \\
&\;s.t. \;\; \;\; \boldsymbol\theta[n] = e^{-j  \boldsymbol{\varphi}[n]},\; \forall n \in \{1, 2, \cdots, N_{R}\}  \\
&\quad \quad \;\;\; \boldsymbol{\varphi}[n] \in \mathcal{B},\;\;\;   \forall n \in \{1, 2, \cdots, N_{R}\}
\end{split}
\end{align}
where $P_m$ is the performance metric of the wireless system that is to be optimized. $P_m$ is dependent on the wireless channel $\mathbf{H}$, and $\mathbf{H}$ is dependent on the reflection pattern $\boldsymbol{\theta}$.
 $\boldsymbol{\varphi}[n]$ is the quantized phase selected from a finite set $\mathcal{B} = \left\{-\pi, \frac{-2^r+2}{2^r}\pi, \frac{-2^r+4}{2^r}\pi, \cdots,   \pi \right\} $ with $2^r+1$ possible values.

 It is worth mentioning that the model-free control does not need to know the exact relationship between the objective $P_m$ and variable $\boldsymbol{\theta}$.

\subsection{A Typical Scenario -- TDD Multi-User MIMO }

Without loss of generality, we  use a typical scenario in wireless communication, i.e., TDD multi-user MIMO, to illustrate our design philosophy. In TDD, by exploiting the channel reciprocity, the BS can estimate the downlink channel from the pilot of the uplink channel. Thus, TDD multi-user MIMO consists of two stages (refer to \figref{FS1}), i.e., \emph{uplink pilot transmission} and \emph{downlink data transmission}\cite{kim2013energy,zhang2015large}.

At uplink  stage, the pilot transmits from multiple users to BS simultaneously. The received pilot signal is represented as
\begin{align}
\mathbf{Y}_{U} & = \mathbf{H}\mathbf{S}  + \mathbf{N}
\end{align}
where  $\mathbf{S}\in \mathbb{C}^{K\times K}$ is the pilot pattern, $\mathbf{N} \in \mathbb{C}^{N_B\times K} $ is the additive white Gaussian noise. Upon receiving the pilot, BS performs  minimum mean square error (MMSE) estimation of the channel matrix, i.e.,
\begin{align}
\hat{\mathbf{H}} =  \mathbf{Y}_U \mathbf{S}^H(\mathbf{S}\mathbf{S}^H+ \sigma^2_U\mathbf{I})^{-1} \label{MMSEest}
\end{align}
When $\mathbf{S}$ is an unitary matrix, \eqref{MMSEest} is further expressed as
\begin{align}
\hat{\mathbf{H}} = \frac{ \mathbf{Y}_U \mathbf{S}^H}{1+ \sigma^2_U }  \label{MMSEest2}
\end{align}

At downlink stage, data transmission with zero-forcing (ZF) precoding is performed, and the precoding matrix is represented as
\begin{subequations}
\begin{align}
\mathbf{M} &= [\mathbf{m}_1, \mathbf{m}_2, \cdots, \mathbf{m}_K]^H   \\
&=   \mathbf{D}_{{P}}(\hat{\mathbf{H}}^H\hat{\mathbf{H}})^{-1}\hat{\mathbf{H}}^H
\end{align}
\end{subequations}
where $\mathbf{D}_p = {\rm diag}([\frac{1}{\|\mathbf{m}_1\|_2}, \frac{1}{\|\mathbf{m}_2\|_2}, \cdots, \frac{1}{\|\mathbf{m}_K\|_2}])$ is for power normalization. The received signal of user $k$ is given by
\begin{align}
y_{D,k} = \mathbf{m}_k^H \mathbf{h}_k x_k + \sum_{l \neq k}^K \mathbf{m}_l^H \mathbf{h}_k  x_l + n_k
\end{align}
where $x_k$ is the signal intended to user $k$ ($\mathbb{E}(x_k)=0$ and  $\mathbb{E}(|x_k|^2) = 1$, $\forall k \in\{1, \cdots, K\}$), and $n_k \sim \mathcal{CN}(0, \sigma_k^2)$ is the additive white Gaussian noise. Thus, the  signal-to-noise ratio of the $k$-th user is
\begin{align}
SINR_k = \frac{|\mathbf{m}_k^H\mathbf{h}_k|^2}{\sum_{l \neq k}^K |\mathbf{m}_l^H \mathbf{h}_k|^2 + \sigma^2_k }
\end{align}

For a communication system, the performance metrics can be SINR, data rate, frame error rate (FER), and etc. Without loss of generality, we adopt the sum data rate as the performance metric, i.e.,
\begin{align}
 P_m = \sum_{k=1}^K r_k  = \sum_{k=1}^K \log_2 (1 + SINR_k) \label{Performance}
\end{align}

\subsection{Channel Model}

We assume the Rician channel model for $\mathbf{h}_{BU_k}$, $\mathbf{H}_{BR}$ and $\mathbf{h}_{RU_k}$. Take $\mathbf{H}_{BR}$ as an example, it is represented as
\begin{align}
\mathbf{H}_{BR}= \sqrt{\frac{K}{K+1}}\mathbf{H}_{BR, LoS} + \sqrt{\frac{1}{K+1}}\mathbf{H}_{BR, NLoS}
\end{align}
where $\mathbf{H}_{BR, LoS}$ denotes the deterministic LoS component, $\mathbf{H}_{BR, NLoS}$ denotes the fast fading NLoS component, and the component of which is independent and identically distributed (i.i.d.) circularly symmetric complex Gaussian random variables with zero-mean and unit variance, and $K$ is the ratio between the power in the LoS path and the power in the NLoS paths \cite{goldsmith2005wireless}.

The LoS component is position-dependent and is thus slow-time-varying; The NLoS components are caused by the multi-path effects and are thus fast-time-varying \cite{ChannelVary}. Combining the characteristics of wireless channel with the setting of reinforcement learning, we introduce the following two concepts.

(1) \textbf{Channel block:}  One channel block consists of the uplink pilot transmission stage and downlink data transmission stage (as shown in \figref{FS1}), and the channel matrix is constant during the channel block.

(2) \textbf{Channel episode:}  One channel episode consists of $T$ channel blocks (as shown in \figref{FS1}). The LoS component within one channel episode remains constant; The NLoS components change over time, and the NLoS components of different channel blocks are i.i.d.

\begin{figure}[tp]{
\begin{center}{\includegraphics[ height=3.3cm]{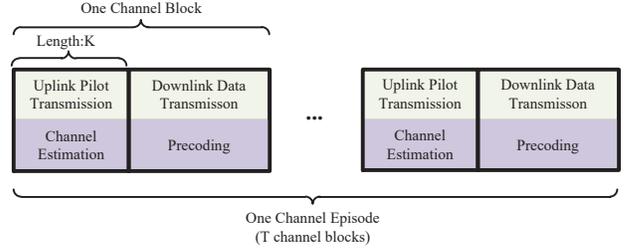}}
\caption{The frame structure of the typical TDD multi-user MIMO (green part) and the corresponding signal processing procedures (purple part) }\label{FS1}
\end{center}}
\end{figure}

\section{Model-Free IRS Control Enabled By Deep Reinforcement Learning}

In this section, we apply DRL to model-free IRS control.

\subsection{Design Objectives}

We aim to achieve stand-alone operation of the wireless communication system and the IRS, and our design includes the following characteristics.

\begin{itemize}
\item
\textbf{Wireless Communication System:} The wireless communication system is almost unaware of the existence of the IRS, except that it needs to feed back its instantaneous performance to the IRS controller. In this regard, the uplink pilot transmission and downlink data transmission exactly follow the conventional structure in \figref{FS1}.

\item
\textbf{IRS:} The IRS is strictly regarded as part of the wireless channel and will not be jointly designed with the wireless communication system.  The configuration of IRS is based on (a) the performance feedback from the wireless system and (b) its learned policy through trail-and-error interaction with the dynamic environment. And the IRS is unaware of the working mechanism of the wireless communication system.
\end{itemize}

A salient advantage of the proposed design is that the IRS can be deployed in various wireless communication applications, e.g., Wi-Fi, 4G-LTE, 5G-NR, without updating their existing protocols, which will speed up the roll-out of IRSs. Another benefit is that, by treating the existing wireless communication system as a black box, the configuration of IRS does not require the overhead-demanding channel sounding process to acquire the CSI of the subchannels,  i.e., $\mathbf{H}_{BU}$, $\mathbf{H}_{BR}$, and $\mathbf{H}_{RU}$, that constitute the aggregated equivalent channel $\mathbf{H}$.

Our design is primarily based on the \emph{reinforcement learning} technology. Specifically, the IRS and its controller are the \emph{agent}, the wireless communication system, which comprises transmitter, wireless channel, and receiver, is the \emph{environment}.  The relationship between the different parties is given in \figref{DRL}. Initially, the agent takes  random \emph{actions} and the environment responds to those actions by giving rise to rewards and presenting new situations to the agent \cite{sutton2018reinforcement,kaelbling1996reinforcement}. Through trail-and-error interaction with the wireless communication system, the agent gradually learns the optimal policy to maximize the expected return over time. In this regard, IRS, which is capable of changing the radio environment, is analogous to the human body, and the reinforcement learning method, which guides the action of IRS, is analogous to the human brain. The integration of the IRS and the reinforcement learning method is the pathway to creating the smart radio environment.

\begin{figure}[tp]{
\begin{center}{\includegraphics[ height=7cm]{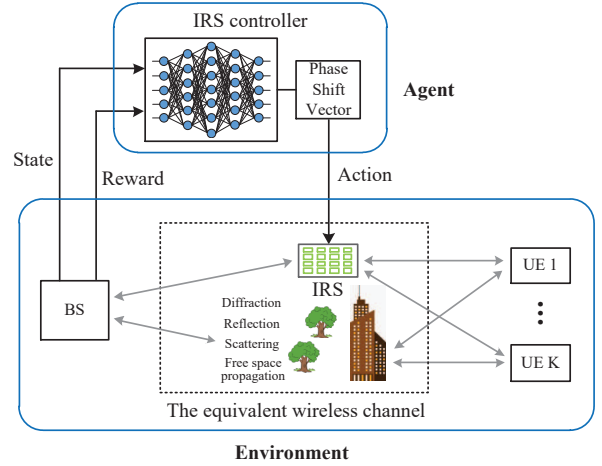}}
\caption{The structure of model-free IRS configuration enabled by deep reinforcement learning}\label{DRL}
\end{center}}
\end{figure}

\subsection{Basics of Deep Reinforcement Learning}
To facilitate the presentation of our design, we briefly introduce some key concepts of DRL in this subsection.

\subsubsection{Objective of Reinforcement Learning}

An MDP is specified by $4$-tuple $\langle \mathcal{S},  \mathcal{A},  {P},  {R}\rangle $, where $\mathcal{S}$ is the state space, $\mathcal{A}$ is the action space, ${P}$ is the state transition probability, and ${R}$ is the immediate reward received by the agent. When an agent in the state  $s\in {S}$ takes the action $a\in \mathcal{A}$, the environment will evolve to the next state $s^{\prime}\in \mathcal{S}$ with probability $ P(s^{\prime}|s,a) = \rm{Pr}(S_{t+1} = s^{\prime}| S_t = s, A_t = a) $, and in the meantime, the agent will receive the immediate reward $R_{s \rightarrow s^{\prime}}^a$. Adding the time index to $S,A,R$, the evolution of an MDP can be represented using the following trajectory
\begin{align}
\langle S_0, A_0, R_1, S_1, A_1, R_2, \cdots, S_{T-1}, A_{T-1}, R_T, S_T, \cdots \rangle \label{trajectory}
\end{align}

The agent's action is directed by the policy function
\begin{align}
\pi(a|s) = \rm{Pr}(A_t=a|S_t=s)
\end{align}
which is the probability that the agent takes action $a$ when the current state is $s$.  A reinforcement learning task intends to find a policy that achieves a good return over the long run, where the return is defined as the cumulative discounted future reward, i.e.,
\begin{align}
U_t  = \Sigma_{\tau = 0}^{\infty} \gamma^{\tau}R_{t+\tau+1} \notag
\end{align}
and $\gamma \in [0, 1]$ is the discount factor for future rewards. Owing to the randomness of state transition (caused by the dynamic environment) and action selection, the return $U_t$ is a random variable. Mathematically,  the agent's goal in reinforcement learning is to find a good policy that maximizes the expected return, i.e.,
\begin{align}
\max_{\pi} \;\; \mathbb{E}(U_t)
\end{align}

\subsubsection{Action-Value Function and Optimal Policy}

One key metric for action selection in reinforcement learning is action-value function, i.e.,
\begin{align}
Q_{\pi}(s, a) = \mathbb{E}[U_t|S_t = s, A_t = a]
\end{align}
which is the conditional expected return for an agent to pick action $a$ in the state $s$ under the policy $\pi$.  For any policy $\pi$ and any state $s \in \mathcal{S}$, action-value function satisfies the following recursive relationship, i.e.,
\begin{align}
& Q_{\pi}(s,a) = \mathbb{E}_{s^{\prime}}\left[R_{s \rightarrow s^{\prime}}^a + \gamma \sum_{a^{\prime}\in \mathcal{A}} \pi(a^{\prime}|s^{\prime})Q_{\pi}(s^{\prime}, a^{\prime})\Big|s^{\prime}, a^{\prime}\right] \notag \\
&= \sum_{s^{\prime}\in\mathcal{S}}P(s^{\prime}|s,a) \left(R_{s \rightarrow s^{\prime}}^a + \gamma \sum_{a^{\prime}\in \mathcal{A}} \pi(a^{\prime}|s^{\prime})Q_{\pi}(s^{\prime}, a^{\prime}) \right) \label{Bellman}
\end{align}
where $R_{s \rightarrow s^{\prime}}^a$ is the immediate reward when the environment transits from state $s$ to state $s^{\prime}$ after taking the action $a$, and Eq. \eqref{Bellman} is the well-known Bellman equation of action-value function \cite{sutton2018reinforcement}.

A policy is defined to be better than another policy if its expected return is greater than that of for all states and all actions. Thus, the optimal action-value function is
\begin{align}
Q^{*}(s, a) := & Q_{\pi^{*}}(s, a) \notag \\
 = &\max_{\pi} Q_{\pi}(s, a),\;\; \forall s \in \mathcal{S}, a \in \mathcal{A} \label{optQ}
\end{align}
With the optimal action-value function, the optimal policy is obviously
\begin{align}
\pi^{*}(a|s) = \left\{\begin{array}{cc}
                     1, &  {\rm if} \; a = \argmax_{a\in \mathcal{A}}  Q^{*}(s, a)\\
                     0, &  {\rm otherwise} \qquad \qquad \qquad \quad \;\;\;
                   \end{array}   \right. \label{optP}
\end{align}
Combining \eqref{optQ}, \eqref{optP} with \eqref{Bellman}, the Bellman optimality equation for $Q^{*}(s,a)$ is given by
\begin{align}
Q^{*}(s,a) & =  \sum_{s^{\prime}\in\mathcal{S}}P(s^{\prime}|s,a) (R_{s \rightarrow s^{\prime}}^a + \gamma\max_{a^{\prime}\in \mathcal{A}} Q^{*}(s^{\prime}, a^{\prime})) \notag \\
& = \mathbb{E}_{s^{\prime}}[ R_{s \rightarrow s^{\prime}}^a + \gamma \max_{a^{\prime}\in \mathcal{A}} Q^{*}(s^{\prime}, a^{\prime})|s_t=s, a_t =a] \label{Bellman2}
\end{align}

With the Bellman optimality equation, the optimal policy $\pi^{*}(a|s)$ or the optimal action-value function $Q^{*}(s,a)$ can be obtained via iterative methods, i.e., policy iteration based methods and value iteration based methods\cite{sutton2018reinforcement}. Hereinafter, we will mainly focus on value iteration based methods.

\subsubsection{Temporal Difference Learning}

The aforementioned iterative methods require the complete knowledge of the environment, i.e., state transition probability $p(s^{\prime}|s,a)$, reward function $R_{s \rightarrow s^{\prime}}^a$, etc.  However, the explicit knowledge of environment dynamics is unavailable in practice. The conditional expectation in \eqref{Bellman2} can be realized via numerically averaging over the sample sequences of states, actions, and rewards from
actual interaction with the environment, e.g., temporal difference method, or Monte Carlo method.

Upon observing a new segment of the trajectory in \eqref{trajectory}, i.e., $\langle S_t = s, A_t = a, R_{t+1} = R_{s \rightarrow s^{\prime}}^a, S_{t+1} = s^{\prime} \rangle$,
the action-value function $Q(s,a)$ updates as follows:
\begin{align}
& Q_{t+1}(s,a) =   \notag \\
&  Q_t(s,a) + \alpha \left(R_{s \rightarrow s^{\prime}}^a + \gamma \max_{a^{\prime}\in \mathcal{A}} Q_t(s^\prime, a^\prime) - Q_t(s, a)\right) \label{Qiterative}
\end{align}
where $\alpha \in (0,1]$ is the learning rate and the following term inside the bracket is the error between the estimated Q value and the return. It means that the value function is updated in the direction of the error, and iteration will terminate when the error becomes infinitesimal.

\subsubsection{Double Deep Q-Network (DDQN)}

When the state $s$ and the action $a$ are both discrete, the optimal state-action function $Q^*(s,a)$ can be obtained as a lookup table, which is also known as Q-table \cite{arulkumaran2017deep}, following the iterative procedures in \eqref{Qiterative}. However,  the size of the state (or action) space can be prohibitively large, and the state (or action) can  even be continuous.
In such cases, it is impractical to represent $Q(s,a)$ as a lookup table. Fortunately, the deep neural network (DNN) can be adopted to approximate the Q-table as $Q(s,a) \approx \widetilde{Q}(s,a; \mathbf{w})$,
which enables reinforcement learning to scale to more generalized decision-making problems. The coefficients $\mathbf{w}$ of $\widetilde{Q}(s,a; \mathbf{w})$ are the weights of the DNN, and the DNN is termed as deep Q-network (DQN) \cite{mnih2015human, arulkumaran2017deep}.

The trajectory segment $\langle S_t = s, A_t = a, R_{t+1} = R_{s \rightarrow s^{\prime}}^a, S_{t+1} = s^{\prime} \rangle$  in  \eqref{trajectory} constitutes an ``experience sample" that will be used to train the DQN, and in accordance with \eqref{Qiterative}, the loss function adopted during the training process of DQN is
\begin{align}
Loss = \Big( \underbrace{R_{s \rightarrow s^{\prime}}^a + \gamma \max_{a^{\prime} \in \mathcal{A}} \widetilde{Q}(s^\prime, a^\prime; \mathbf{w})}_{T_{DQN}} - \widetilde{Q}(s, a; \mathbf{w}) \Big)^2
\end{align}
where $T_{DQN}$ is the target value fed to the network.

The target $T_{DQN}$ is dependent on the immediate reward $R_{s \rightarrow s^{\prime}}^a $, as well as the output of the DQN $\widetilde{Q}(s^\prime, a^\prime; \mathbf{w})$. Such structure will inevitably result in over-estimation of the action state value (a.k.a., Q value) during the training process and thus will significantly degrade the performance of DRL. To mitigate over-estimation, we will adopt the double DQN (DDQN) structure  \cite{mnih2013playing, van2016deep} in our design.

The fundamental idea of DDQN is to apply a separate target network  $\widetilde{Q}(s^\prime, a^\prime; \mathbf{w}^{-})$ to estimate the target value \cite{van2016deep}, and the expression of target in DDQN is
\begin{align}
T_{DQN} = R_{s \rightarrow s^{\prime}}^a + \gamma \widetilde{Q}(s^\prime, {\argmax_{a^{\prime} \in \mathcal{A}} \widetilde{Q}(s^\prime, a^\prime; \mathbf{w});  \mathbf{w}^{-}}) \label{TargetVal}
\end{align}
To summarize, DDQN differs from DQN in the following two aspects, i.e., (1) the optimal action is selected using the DQN $\widetilde{Q}(s^\prime, a^\prime; \mathbf{w})$
whose weights are $\mathbf{w}$, and (2) the Q value of the target value is taken from the target network whose weights are $\mathbf{w}^{-}$.

\subsection{Model-Free Control of IRS Using Deep Reinforcement Learning}

To apply reinforcement learning to model-free IRS configuration, we firstly model IRS assisted wireless communications as an MDP.

\begin{itemize}
\item \emph{Agent}: The agent is IRS controller, which is capable of autonomously interacting with the environment via IRS to meet the design objectives.

\item \emph{Environment}: The environment refers to the things that the agent interact with, which includes BS, wireless channel, IRS, and mobile users.

\item \emph{State}: To facilitate the  accurate prediction of expected next rewards and next states given an action,  we define the state as $\{\mathbf{H},  \boldsymbol{\theta}\}$,  which  consists of two sub-states, namely the equivalent wireless channel $\mathbf{H}$ and the reflection vector $\boldsymbol{\theta}$ of IRS.

\item \emph{Action}: The action is defined as the incremental phase shift of the current reflection pattern, i.e.,
\begin{align}
 \boldsymbol{\theta}^{(t+1)} = \boldsymbol{\theta}^{(t)} \odot \Delta \boldsymbol{\theta}^{(t)}
\end{align}
where $\odot$ is the Hadamard (element-wise) product, $\boldsymbol{\theta}^{(t)}$ is the reflection pattern at the $t$-th channel block, and $\Delta \boldsymbol{\theta}^{(t)} $ is the incremental phase shift of  $\boldsymbol{\theta}^{(t)}$.
We use the subset (or full set) of the discrete Fourier transform (DFT) vectors as the action set. For example, when the size of action space is $5$, we set  $\mathcal{A} = \left\{\mathbf{v}(-\frac{6}{N_R}), \mathbf{v}(-\frac{2}{N_R}), \mathbf{v}(0), \mathbf{v}(\frac{2}{N_R}), \mathbf{v}(\frac{6}{N_R})\right\}$, where $\mathbf{v}(\Psi_R)$ is the steering vector \footnotemark, i.e.,
\begin{align}
\mathbf{v}(\Psi_R) &= \left[1,\; e^{j  \pi \Psi_R},\; \cdots,\; e^{j  (N_{R} - 1)\pi \Psi_R} \right]^T \notag
\end{align}
When $\Delta\boldsymbol{\theta}^{(t)}  = \mathbf{v}(0)$, the sub-state $\boldsymbol{\theta}$ stays unchanged, and the sub-state $\mathbf{H}$ changes merely due to the variation of NLoS components; $\Delta\boldsymbol{\theta}^{(t)}  = \mathbf{v}(-\frac{2}{N_R})$ and $\Delta\boldsymbol{\theta}^{(t)}  = \mathbf{v}(\frac{2}{N_R})$ are towards the opposite directions, which enables the agent to quickly correct from a negative action; $\Delta\boldsymbol{\theta}^{(t)}  = \mathbf{v}(-\frac{6}{N_R})$ and $\Delta\boldsymbol{\theta}^{(t)}  = \mathbf{v}(\frac{6}{N_R})$ are used to speed up the transition of reflection pattern.

\item \emph{Reward}: The immediate reward after transition from  $s$ to $s^{\prime}$ with action $a$ is defined as
\begin{align}
R=\left\{  \begin{array}{cc}
             P_m, & \; {\rm when} \;\; P_m \geq P_{th} \\
             P_m-100, &\; {\rm when} \;\; P_m < P_{th}
           \end{array}
\right.
\end{align}
where $P_{th}$ is a performance threshold. When $P_m$ is less than  $P_{th}$, we add an penalty $-100$ to encourage the IRS to maximize performance, while maintaining an acceptable performance above the threshold.

\end{itemize}

\footnotetext{Without loss of generality, we assume that the reflector array of IRS is a uniform linear array (ULA).}

\begin{remark}{\rm
The reasons for using incremental phase shift, rather than the absolute phase shift, as the action are two-fold. On one hand,  we need to build the Markov property of the state transmission, and, on the other hand, we intend to reduce the size action space and accelerate convergence rate. }
\end{remark}

\begin{algorithm}[h]

    \caption{Double DQN based model-free  IRS control for IRS-assisted wireless communications}
           Initialize parameters $s_0, \epsilon$;\\
           Initialize the FIFO memory $\mathcal{M}$ with the size $N_m$; \\
           Initialize the weights of the DQN $\mathbf{w}$ and set the target network as $\mathbf{w}^{-} = \mathbf{w}$ \\
           \textbf{for} $t = 0, 1, 2,  \cdots$ \textbf{do} \\
           \quad Input $s_t$ to the DQN and obtain the state-action values $\widetilde{Q}(s_t, a; \mathbf{w}), a \in \mathcal{A}$;\\
           \quad With $\widetilde{Q}(s_t, a; \mathbf{w}), a \in \mathcal{A}$, select an action $a_t$ using $\epsilon$-greedy policy;\\
           \quad Receive the reward $r_{t+1}$ and the estimated channel response $\hat{\mathbf{H}}_{t+1}$,  and compute the next state $s_{t+1}$ from $\hat{\mathbf{H}}_{t+1}$, $s_t$ and $a_t$.\\
           \quad Store the experience tuple $\langle s_t, a_t, r_{t+1}, s_{t+1}  \rangle$ to the FIFO memory $\mathcal{M}$;

           \quad \textbf{If}  $|\mathcal{M}|\geq N_{e}$ \\

           \qquad Randomly select a mini batch of $N_{e}$ experience \\
            \quad tuples $\langle s_i, a_i, r_{i+1}, s_{i+1}  \rangle$ from $\mathcal{M}$. \\
           \qquad Calculate the target values $T_{DQN, i}$ for the mini  \\
           \quad batch according to \eqref{TargetVal}. \\
           \qquad With the input $\{ s_i\}$ and the output $\{T_{DQN, i}\}$, train \\
           \quad the DQN, and update its weights $\mathbf{w}$. \\
           \qquad \textbf{If} $t \mod N_{TNet} = 0$, update the weights of the \\
           \quad target network, i.e., set $\mathbf{w}^{-} = \mathbf{w}$. \\
           \qquad \textbf{end if} \\
           \quad \textbf{end if}\\
    \textbf{       end for }
\end{algorithm}

Based on the modeled MDP and the basics of DRL presented in Subsection B, we propose to maximize the expected return (cumulative discounted future reward) using Algorithm 1. Some of the key techniques applied in Algorithm 1 are explained as follows.

\begin{figure}[tp]{
\begin{center}{\includegraphics[ height=6cm]{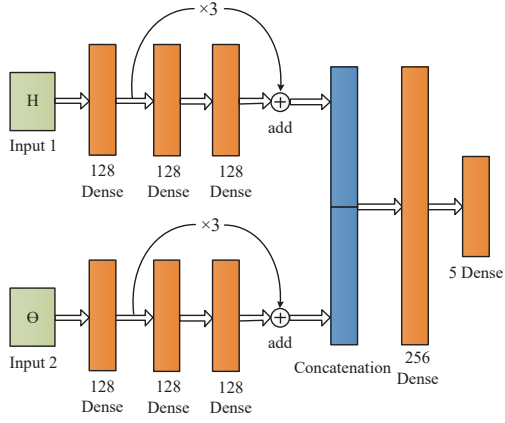}}
\caption{Structure of the DQN}\label{NetworkStructureAAA}
\end{center}}
\end{figure}

\subsubsection{DDQN}  Different from the naive DQN method, where the DQN $\widetilde{Q}(s, a; \mathbf{w})$ (with the weights $\mathbf{w}$) is used to generate the target value, we use a separate target network $\widetilde{Q}(s, a; \mathbf{w}^-)$ (with the weights $\mathbf{w}^-$) to generate the target value, and the weights of the target network are updated  by $\mathbf{w}^- = \mathbf{w}$ in every $N_{TNet}$ time intervals. The structure of the network is shown in \figref{NetworkStructureAAA}. Specifically, we apply the deep residual network (ResNet) \cite{he2016deep} to process two sub-states (i.e., $\mathbf{H}$ and $\boldsymbol{\theta}$), and then we fuse the processed information of the two sub-states using a two-layer dense network. The activation function that we use is the Swish function \cite{ramachandran2017searching}.

\subsubsection{$\epsilon$-Greedy Policy}  Given the perfect $\widetilde{Q}(s, a; \mathbf{w})$, the optimal policy is to select the action that yields the largest state-action value. However, the perfect $\widetilde{Q}(s, a; \mathbf{w})$ demands for a infinite size of experiences, which is impractical and infeasible in the dynamic wireless environment. Therefore, it is necessary for the agent to keep exploring to avoid get stuck with a sub-optimal policy. To this end, we apply an $\epsilon$-greedy policy. In $\epsilon$-greedy policy, $\epsilon$ refers to the probability of choosing to explore, i.e., randomly select from all the possible actions, and $1-\epsilon$ is the probability of choosing to exploit the obtained DQN in decision making. In this regard, the $\epsilon$-greedy policy is represented as
\begin{align}
\pi^{\epsilon} = \left\{\begin{array}{cc}
                     \pi^*(a/s), & w.p. \;\; 1- \epsilon\\
                     P(a) =  \frac{1}{|\mathcal{A}|}, &  w.p. \; \;  \epsilon
                   \end{array}   \right.
\end{align}
where $\pi^*(a/s)$ as the policy based on the Q-network, which is introduced in \eqref{optP}. In our design, $\epsilon$ is initially set to $1$ and decreases exponentially at a rate of $\vartheta, (0<\vartheta<1)$ every time interval until its reaches the lower bound $\epsilon_{min}$.

\subsubsection{Experience Replay} Instead of training the DQN with the latest experience tuple, we store $N_e$ recent experience tuples in the memory $\mathcal{M}$ in ``first in, first out" (FIFO) manner, i.e., queue data structure, and then randomly fetch a mini-batch of $N_e$ experience samples from $\mathcal{M}$ to train the DQN.

\begin{figure}[tp]{
\begin{center}{\includegraphics[ height=6cm]{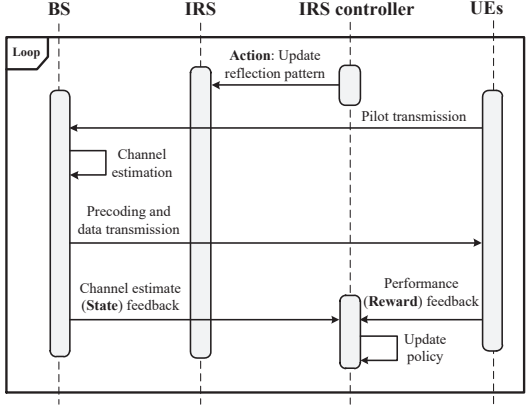}}
\caption{Sequence diagram of the proposed model-free IRS configuration}\label{UML}
\end{center}}
\end{figure}
\subsection{Summarizing The Work Flow of Model-Free IRS Configuration  }

In this subsection, we summarize the work flow of the proposed model-free IRS configuration. To this end, we plot the sequence diagram in \figref{UML}.

According to \figref{UML}, in a specific loop, the IRS is configured with a reflection pattern ($  \boldsymbol{\theta}^{(t+1)} = \boldsymbol{\theta}^{(t)} \odot \Delta \boldsymbol{\theta}^{(t)}$) according to the $\epsilon$-greedy policy, and then UEs and BS perform uplink pilot transmission and downlink data transmission sequentially as if there exists no IRS. After that, BS sends the estimated channel matrix ($\hat{\mathbf{H}}^{(t+1)}$) to IRS controller, which can be fulfilled through wired communications, and UEs send back their performance metrics to the IRS controller. Finally, according to the received channel estimate ($\hat{\mathbf{H}}^{(t+1)}$) and performance feedback ($P_m^{t+1}$), IRS controller derives the tuple $\langle \{\hat{\mathbf{H}}^{(t)}, \boldsymbol{\theta}^{(t)} \}, \Delta\boldsymbol{\theta}^{(t)}, R^{(t+1)},    \{\hat{\mathbf{H}}^{(t+1)}, \boldsymbol{\theta}^{(t+1)} \}\rangle$ and stores it in the FIFO queue as the training data for the DQN $\widetilde{Q}(s_t, a; \mathbf{w})$.

Compared with the traditional TDD multi-user MIMO, the extra efforts of incorporating IRS are merely the feedback of $\hat{\mathbf{H}}^{(t)}$ and $P_m^{t+1}$. The former can be easily achieved via wired communications between BS and IRS, and the latter costs negligible wireless communication resources of the mobile UEs. It is also noteworthy that IRS controller is unaware of the working mechanism of BS and UEs and does not require the CSI of the sub-channels.

\section{Enhancing IRS Control Using Extremum Seeking Control}

In DRL, action space is restrained for a fast convergence rate, which limits the phase freedom of IRS. To enhance the control of IRS, another model-free real-time optimization method, namely, extremum seeking control  (ESC), is used to design the fine phase control of IRS.

\subsection{Model-Free Control of IRS Using ESC}

ESC is model-free method to realize a learning-based adaptive controller for maximizing/minimizing certain system performance metrics \cite{ariyur2003real, ESCIEEE}. The first application of ESC can be traced back to the work of the French engineer Leblanc  in 1922 to maintain an efficient power transfer for a tram car  \cite{TanESC}. The basic idea of ESC is to add a dither signal (e.g., sinusoidal signal \cite{wang1999optimizing, nevsic2009extremum}, and random noise \cite{carnevale2010maximizing}) to the system input and observing its effect on the output  to obtain an approximate \emph{implicit gradient} of a nonlinear static map of the unknown system \cite{atta2015extremum, ariyur2003real}.

According to the design philosophy of ESC, we propose a dither-based model-free control of IRS as in \figref{ESC}. Our design consists of three parts, i.e., dither signal generation module, ascent direction estimation module, and parameter update module. Dither signal generation module generates random dither/pertubation signal to probe the response  $P_m(\cdot)$ of the  system; gradient estimation module determines the update direction of the system input according to the system performance  $P_m(\boldsymbol{\varphi} + \Delta\boldsymbol{\varphi} )$  to guarantee the monotonic increase of the performance, and it also guides on-off switch of the random dither signal generation; parameter update module updates the system input $\boldsymbol{\varphi}$ according to the estimated direction.

\begin{figure}[tp]{
\begin{center}{\includegraphics[ height=5.5cm]{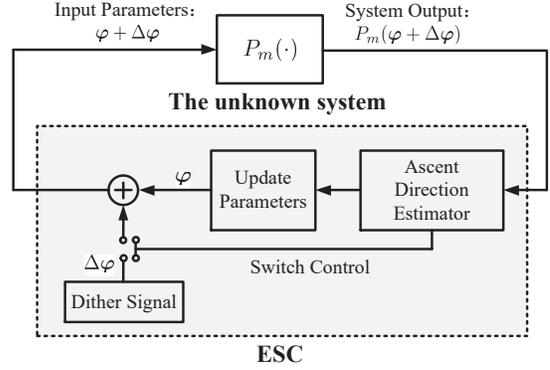}}
\caption{Principle of the proposed ESC-inspired dither-based iterative method}\label{ESC}
\end{center}}
\end{figure}

Specifically, the iterative process in \figref{ESC} runs as follows.

\vspace{0.1cm}
\noindent \textbf{\emph{Step 1. Dither Signal Generation}}
\vspace{0.1cm}

Generate a small random dither signal through uniform random distribution\footnotemark, i.e.,
\begin{align}
\Delta \boldsymbol{\varphi}[n]  = \frac{a}{2^{r-1}},  \;\;\; a \in \mathcal{U}\left\{-\frac{2^{r-1}}{N_R}, \frac{2^{r-1}}{N_R} \right\}
\end{align}

Then, add the dither signal  $\Delta \boldsymbol{\varphi} $ to the parameter $\boldsymbol{\varphi}$, use $ \boldsymbol{\varphi} + \Delta \boldsymbol{\varphi}$ as the input of the system, and receive the feedback of the performance metric $P_m(\boldsymbol{\varphi} + \Delta \boldsymbol{\varphi})$.

\footnotetext{The parameter selection will be explained in the following context.}

\vspace{0.1cm}
\noindent \textbf{\emph{Step 2. Direction Estimation and Parameter Update}}

\vspace{0.1cm}
\noindent{\underline{Condition 1}}. If  $P_m(\boldsymbol{\varphi} + \Delta \boldsymbol{\varphi})\geq P_m(\boldsymbol{\varphi})$, adopt $\Delta \boldsymbol{\varphi}$ as the direction.  Update the parameter as
\begin{align}
\boldsymbol{\varphi}\leftarrow\boldsymbol{\varphi}+ \Delta \boldsymbol{\varphi}, \label{Opt1}
\end{align}
and update the performance metric as
\begin{align}
P_m(\boldsymbol{\varphi}) \leftarrow P_m(\boldsymbol{\varphi}+\Delta \boldsymbol{\varphi})
\end{align}
Then, jump to Step 1 for the next iteration;

\vspace{0.15cm}
\noindent{\underline{Condition 2}}.  Else if $P_m(\boldsymbol{\varphi} + \Delta \boldsymbol{\varphi})<P_m(\boldsymbol{\varphi})$, adopt $-\Delta \boldsymbol{\varphi}$ as the direction. Update the parameter as
set
\begin{align}
\begin{split}
\boldsymbol{\varphi}& \leftarrow  \boldsymbol{\varphi} - \Delta \boldsymbol{\varphi}   \label{Opt2}
\end{split}
\end{align}
Turn off the dither signal, use only $\boldsymbol{\varphi}$ as the system input, and measure the system performance $P_m(\boldsymbol{\varphi})$. Then, jump to Step 1 for the next iteration.

\vspace{0.1cm}

It is noteworthy that each iteration uses one or two time intervals, and each iteration can guarantee the monotonic increase of the performance metric $P_m$, which is validated in the following proposition.

\begin{proposition} {\rm
Each iteration in ESC based iterative process can guarantee the monotonic increase of the performance metric $P_m$ given that the norm of the random dither signal, i.e.,  $\| \Delta \boldsymbol{\varphi} \|$, is small enough.}
\end{proposition}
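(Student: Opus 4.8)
The plan is to treat $P_m$ as a smooth nonlinear static map of the phase vector $\boldsymbol{\varphi}$ — the standard modelling abstraction underlying ESC — and to verify the monotonicity claim separately for the two branches of Step~2. Write $f(\boldsymbol{\varphi}):=P_m(\boldsymbol{\varphi})$, let $\mathbf{g}:=\nabla_{\boldsymbol{\varphi}} f(\boldsymbol{\varphi})$ be its gradient at the current operating point, and recall that one iteration updates the operating point to $\boldsymbol{\varphi}+\Delta\boldsymbol{\varphi}$ under Condition~1 and to $\boldsymbol{\varphi}-\Delta\boldsymbol{\varphi}$ under Condition~2, the stored reference value $P_m(\boldsymbol{\varphi})$ being refreshed accordingly. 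Condition~1 is then immediate: that branch is entered only when $P_m(\boldsymbol{\varphi}+\Delta\boldsymbol{\varphi})\ge P_m(\boldsymbol{\varphi})$, so the new reference performance is $P_m(\boldsymbol{\varphi}+\Delta\boldsymbol{\varphi})\ge P_m(\boldsymbol{\varphi})$ and does not decrease. The substance of the argument is Condition~2.

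For Condition~2 I would Taylor-expand $f$ about $\boldsymbol{\varphi}$ in both directions,
\[
f(\boldsymbol{\varphi}\pm\Delta\boldsymbol{\varphi}) = f(\boldsymbol{\varphi}) \pm \mathbf{g}^{T}\Delta\boldsymbol{\varphi} + O\!\left(\|\Delta\boldsymbol{\varphi}\|^{2}\right),
\]
and add the two relations to get $\big[f(\boldsymbol{\varphi}-\Delta\boldsymbol{\varphi})-f(\boldsymbol{\varphi})\big] = -\big[f(\boldsymbol{\varphi}+\Delta\boldsymbol{\varphi})-f(\boldsymbol{\varphi})\big] + O(\|\Delta\boldsymbol{\varphi}\|^{2})$. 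Since Condition~2 is entered precisely when $\delta := f(\boldsymbol{\varphi})-f(\boldsymbol{\varphi}+\Delta\boldsymbol{\varphi})>0$, this reads $f(\boldsymbol{\varphi}-\Delta\boldsymbol{\varphi})-f(\boldsymbol{\varphi}) = \delta + O(\|\Delta\boldsymbol{\varphi}\|^{2})$; and because $\delta = -\mathbf{g}^{T}\Delta\boldsymbol{\varphi}+O(\|\Delta\boldsymbol{\varphi}\|^{2})$ is of first order in $\|\Delta\boldsymbol{\varphi}\|$, it dominates the $O(\|\Delta\boldsymbol{\varphi}\|^{2})$ remainder once $\|\Delta\boldsymbol{\varphi}\|$ is small enough, so $f(\boldsymbol{\varphi}-\Delta\boldsymbol{\varphi})>f(\boldsymbol{\varphi})$ — the reflected step strictly increases $P_m$. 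Combining the two branches yields the asserted behavior: $P_m$ never decreases over an iteration, and strictly increases whenever the operating point actually moves.

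The delicate point, and the one I expect to be the main obstacle, is the degenerate case in which the first-order term $\mathbf{g}^{T}\Delta\boldsymbol{\varphi}$ vanishes — the dither happens to be orthogonal to the gradient, or the current point is (near-)stationary with $\mathbf{g}\approx\mathbf{0}$. Then $\pm\Delta\boldsymbol{\varphi}$ both produce second-order changes of the \emph{same} sign, and if that sign is negative neither branch increases $P_m$, so the estimate "$\delta$ of first order" used above breaks down. I would dispose of this by appealing to the randomness of the dither (for a fixed nonzero gradient, $\mathbf{g}^{T}\Delta\boldsymbol{\varphi}=0$ is a non-generic event, so with overwhelming probability the first-order argument applies) together with the observation that at a genuinely stationary point the map has reached a local optimum, where "monotone increase" holds vacuously and no further gain is sought. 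This is also the precise role of the hypothesis that $\|\Delta\boldsymbol{\varphi}\|$ be small enough: it guarantees that the $O(\|\Delta\boldsymbol{\varphi}\|^{2})$ curvature term is dominated by whichever first-order increment $|\mathbf{g}^{T}\Delta\boldsymbol{\varphi}|$ is available, which is exactly the condition under which the dither-and-reflect rule of Step~2 cannot be misled by local curvature.
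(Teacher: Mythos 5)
Your proposal is correct and follows essentially the same route as the paper's own proof: a first-order Taylor expansion of $P_m$ about $\boldsymbol{\varphi}$, using the observed decrease in Condition~2 to infer that $\frac{\partial P_m(\boldsymbol{\varphi})}{\partial \boldsymbol{\varphi}^H}\Delta\boldsymbol{\varphi}<0$ and hence that the reflected step $-\Delta\boldsymbol{\varphi}$ increases $P_m$ (Condition~1 being trivial). Your extra discussion of the degenerate case where the first-order term vanishes is a point the paper's proof silently ignores, but it does not change the argument's structure.
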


\begin{proof}
To prove Proposition 1, it is essential to validate that the operation \eqref{Opt2} in Condition 2 of {Step 2} guarantees the increase of $P_m$, i.e., when $P_m(\boldsymbol{\varphi} + \Delta \boldsymbol{\varphi} ) < P_m(\boldsymbol{\varphi})$, the following inequality
\begin{align}
P_m(\boldsymbol{\varphi} - \Delta \boldsymbol{\varphi} )> P_m(\boldsymbol{\varphi}) \notag
\end{align}
holds.

To this end, we expand $P_m(\boldsymbol{\varphi}  + \Delta \boldsymbol{\varphi} ) $ using Taylor series of $P_m$ with respect to $\boldsymbol{\varphi}$, i.e.,
\begin{align}
&P_m(\boldsymbol{\varphi}  + \Delta \boldsymbol{\varphi} ) \notag \\
= &P_m(\boldsymbol{\varphi} ) +  \frac{\partial P_m(\boldsymbol{\varphi})}{\partial \boldsymbol{\varphi}^H} \Delta  \boldsymbol{\varphi} + \mathcal{O}(\|\Delta  \boldsymbol{\varphi}\|^2) \quad {\rm as}\;\; \Delta  \boldsymbol{\varphi}\rightarrow 0
\end{align}
Since $\| \Delta \boldsymbol{\varphi} \| $ is small, namely, $\| \Delta \boldsymbol{\varphi} \| \rightarrow 0 $, we adopt the first-order approximation, i.e.,
\begin{align}
P_m(\boldsymbol{\varphi} + \Delta \boldsymbol{\varphi} ) \approx P_m(\boldsymbol{\varphi} ) +  \frac{\partial P_m(\boldsymbol{\varphi})}{\partial \boldsymbol{\varphi}^H} \Delta  \boldsymbol{\varphi}
\end{align}

As   it is reported by the system that $P_m(\boldsymbol{\varphi} + \Delta \boldsymbol{\varphi} ) < P_m(\boldsymbol{\varphi})$ in Condition 2, we have
\begin{align}
\frac{\partial P_m(\boldsymbol{\varphi} )}{\partial \boldsymbol{\varphi}^H} \Delta  \boldsymbol{\varphi} < 0
\end{align}
Then, it is easy to verify that
\begin{align}
P_m(\boldsymbol{\varphi}  - \Delta \boldsymbol{\varphi})&\approx P_m(\boldsymbol{\varphi}) -  \frac{\partial P_m(\boldsymbol{\varphi})}{\partial \boldsymbol{\varphi}^H} \Delta  \boldsymbol{\varphi} \notag \\
& > P_m(\boldsymbol{\varphi})
\end{align}
\end{proof}


\begin{remark} {\rm
As $\boldsymbol{\theta}[n] =  e^{-j \pi \boldsymbol{\varphi}[n]}$, the operations of \eqref{Opt1} and \eqref{Opt2} can be written w.r.t. $\boldsymbol{\theta}$ as follows }
\begin{subequations}
\begin{align}
\boldsymbol{\theta} & \leftarrow \boldsymbol{\theta} \odot \Delta \boldsymbol{\theta}  \\
\boldsymbol{\theta}  & \leftarrow  \boldsymbol{\theta} \odot  \Delta \boldsymbol{\theta}^*
\end{align}
\end{subequations}
\end{remark}

\subsection{Comparison with Gradient Ascent Search}

To obtain further insights into the proposed dither-based method, we make a comparison with the well-known iterative algorithm -- gradient ascent (descent in minimization problems) search.

For gradient ascent search algorithm, $\boldsymbol{\varphi}$ in each iteration is updated as follows.
\begin{align}
\boldsymbol{\varphi}  \leftarrow  \boldsymbol{\varphi}  +  \underbrace{ \gamma  \frac{\partial P_m(\boldsymbol{\varphi} )}{\partial \boldsymbol{\varphi} }}_{\Delta \boldsymbol{\varphi}}
\end{align}
When the step size $\gamma$ is small, the iteration will almost surely guarantee the increase of $P_m(\boldsymbol{\varphi} )$, because
\begin{align}
P_m(\boldsymbol{\varphi}  + \Delta \boldsymbol{\varphi} ) & \approx P_m(\boldsymbol{\varphi} ) +  \gamma \frac{\partial P_m(\boldsymbol{\varphi} )}{\partial \boldsymbol{\varphi} ^H} \frac{\partial P_m(\boldsymbol{\varphi} )}{\partial \boldsymbol{\varphi}} \notag \\
& = P_m(\boldsymbol{\varphi} ) +  \gamma  \| \frac{\partial P_m(\boldsymbol{\varphi} )}{\partial \boldsymbol{\varphi}}\|_2^2 \geq P_m(\boldsymbol{\varphi} )
\end{align}

\begin{remark} {\rm
As gradient ascent search take steps in the direction of the gradient, it is also called steepest descent. Thus, the convergence rate of  gradient ascent search  is faster than dither-based extremum search when the step size $\gamma$ is properly selected. On the other hand, it is also noteworthy that gradient ascent search requires the exact expression of the gradient $\frac{\partial P_m(\boldsymbol{\varphi} )}{\partial \boldsymbol{\varphi} }$, whilst dither-based method is implemented through trial and error, which does not rely on any explicit knowledge of the wireless system's internal working mechanism.  }
\end{remark}

\subsection{Integrating ESC Into DRL}

\begin{figure}[tp]{
\begin{center}{\includegraphics[ height=4.2cm]{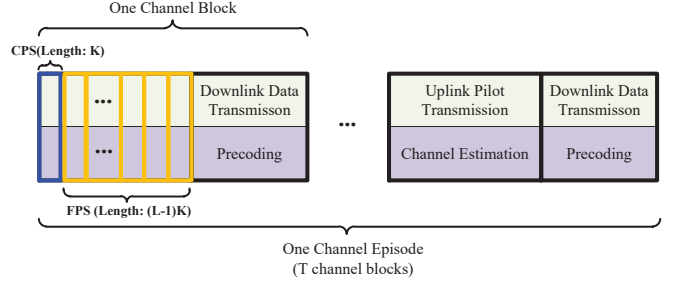}}
\caption{The upgraded frame structure for the integration of ESC into DRL}\label{UpgradedFS1}
\end{center}}
\end{figure}

Recall that the action space for DRL is intentionally restrained for a fast convergence rate, whilst dither-based iterative method relies on the small-scale phase shift. Therefore,  dither-based iterative method is  complementary to the action in DRL and can be applied to enhance the action of DRL.

\vspace{0.05cm}
The \textbf{enhanced action}  of DRL is defined as follows.

\vspace{0.05cm}
\noindent\underline{\emph{Step 1. Coarse phase shift (CPS)}}
\vspace{0.1cm}
When $l = 0$, set
\begin{align}
 \boldsymbol{\theta}^{(t+1)}_{temp} = \boldsymbol{\theta}^{(t)} \odot \Delta \boldsymbol{\theta}_c^{(t)} \label{RefPatternCoarse}
\end{align}
where  $\boldsymbol{\theta}^{(t)}$ is the reflection pattern at the $t$-th channel block, $\Delta \boldsymbol{\theta}_c^{(t)} $ is the coarse incremental phase shift at the $t$-th channel block, and $ \boldsymbol{\theta}^{(t+1)}_{temp} $ is the intermediate reflection pattern at the $t-1$-th channel block.
Following the example in Section III. C,  the action set of the incremental phase shift  $\Delta \boldsymbol{\theta}_c^{(t)} $ is  $\mathcal{A} = \left\{\mathbf{v}(-\frac{6}{N_R}), \mathbf{v}(-\frac{2}{N_R}), \mathbf{v}(0), \mathbf{v}(\frac{2}{N_R}), \mathbf{v}(\frac{6}{N_R})\right\}$.

\vspace{0.15cm}
\noindent\underline{\emph{Step 2. Fine phase shift (FPS)}}
\vspace{0.1cm}\\
For $l$ from $1$ to $L$, do
\begin{align}
\boldsymbol{\theta}^{(t+1)}_{temp}  \leftarrow  \boldsymbol{\theta}^{(t+1)}_{temp} \odot \Delta \boldsymbol{\theta}_f \label{RefPatternFine}
\end{align}
where $\Delta \boldsymbol{\theta}_f = \Delta \boldsymbol{\theta} \; ({\rm or} \; \Delta \boldsymbol{\theta}_f =  \Delta \boldsymbol{\theta}^*)$ is the ascent direction, and $\Delta \boldsymbol{\theta}$ is the random dither signal.

\begin{remark} {\rm
For example, when the quantization level $r=8$, and $N_R=32$, the step of coarse phase shift is $\frac{2\pi}{32}$, and the step of fine phase shift is $\frac{2\pi}{256}$ with the range $[-\frac{\pi}{32} , \; \frac{\pi}{32}]$. }
\end{remark}

To be compatible with the enhanced action, the frame structure needs to be updated as in \figref{UpgradedFS1}. In the first $K$ time slots, UEs transmit pilots and BS performs channel estimation with the reflection pattern in  \eqref{RefPatternCoarse}, and in the subsequent $(L-1)K$ time slots, UEs repeatedly transmit pilots and BS performs channel estimation, while the reflection pattern updates as in  \eqref{RefPatternFine}. It is noteworthy that, as the  performance feedback is done once per channel block, the performance metric used for the dither-based method is an approximation  derived by replacing the authentic channel response $\mathbf{H}$ in  \eqref{Performance} with the channel estimate $\hat{\mathbf{H}}$.

\begin{remark}{\rm
The parameter $L$ can be set adaptively according to the channel dynamics. For a practical wireless communication system, different values of $L$ correspond to different modes.
}
\end{remark}

\section{Numerical Results}

In this section, we present some numerical results to verify the effectiveness of our proposed model-free control of IRS \footnotemark.

\footnotetext{The simulation code is available at https://github.com/WeiWang-WYS/IRSconfigurationDRL}

\subsection{Simulation Parameters}

The BS is equipped with a ULA that is placed along the direction $[1, 0, 0]$ (i.e., x-axis), and IRS is a ULA, which is placed  along the direction $[0, 1, 0]$ (i.e., y-axis), UEs are equipped with a single antenna,  the user number is $K=2$, BS antenna number is $N_B = 2$, and IRS reflector number if $N_R = 32$. The element antennas/reflectors of BS and IRS are both with half wavelength spacing.  The position of BS is $[0, 0, 10]$, the position of IRS is $[-2, 5, 5]$, and the UEs are uniformly distributed in the area $[0, 10) \times [0, 10)$  with the height being $1.5$. The noise variance at BS side is $\sigma_B^2 = 0.1$,  the noise variance at UE side is $\sigma^2_k = 0.5, \; \forall k\in \{1,\cdots,K \}$. Each channel episode consists of $20$ channel blocks. {In each channel episode, the LoS component is generated by randomly selecting the user locations within the area $[0, 10) \times [0, 10)$, and the LoS component is time-invariant within the $20$ channel blocks of that channel episode.}

The LoS channel between BS and IRS is
\begin{align}
\mathbf{H}_{BR, LoS} =  \mathbf{v}_R  \mathbf{v}_B^H
\end{align}
where the steering vectors are represented as
\begin{subequations}
\begin{align}
\mathbf{v}_R  = \mathbf{v}(\Psi_R, N_{R,y}) &= \left[1,\; e^{j  \pi \Psi_R},\; \cdots,\; e^{j  (N_{R,y} - 1)\pi \Psi_R} \right]^T \notag \\
\mathbf{v}_B = \mathbf{v}(\Psi_B, N_{B,x}) &= \left[1,\; e^{j  \pi \Psi_B},\; \cdots,\; e^{j  (N_{B,x} - 1)\pi \Psi_B} \right]^T \notag
\end{align}
\end{subequations}
and, according to  \cite{JitteringWang}, the directional cosines $\Psi_R,  \Psi_B$ are given by
\begin{subequations}
\begin{align}
\Psi_R & = [0, 1, 0] \mathbf{e}_{BR} = \mathbf{e}_{BR}(2) \\
\Psi_B  &   = [1, 0, 0] \mathbf{e}_{BR} = \mathbf{e}_{BR}(1)
\end{align} \label{BSside}
\end{subequations}
where the direction vector $\mathbf{e}_{BR}$ is determined by the relative position of BS and UE, i.e.,
\begin{align}
\mathbf{e}_{BR} &\triangleq \frac{\mathbf{p}_{B} - \mathbf{p}_{R}}{\|\mathbf{p}_{B} - \mathbf{p}_{R}\|_2} \label{DirectionEq}
\end{align}
The NLoS components are Gaussian distributed, i.e., $\mathbf{H}_{BR, NLoS}(\ell, \kappa) \in \mathcal{CN}(0, 1)$. The channel matrix $\mathbf{H}_{BU}$ and $\mathbf{H}_{RU}$ are generated in the same way.

\begin{figure}[t]
\begin{minipage}[!h]{.48\linewidth}
\centering
\includegraphics[width=1.1\textwidth]{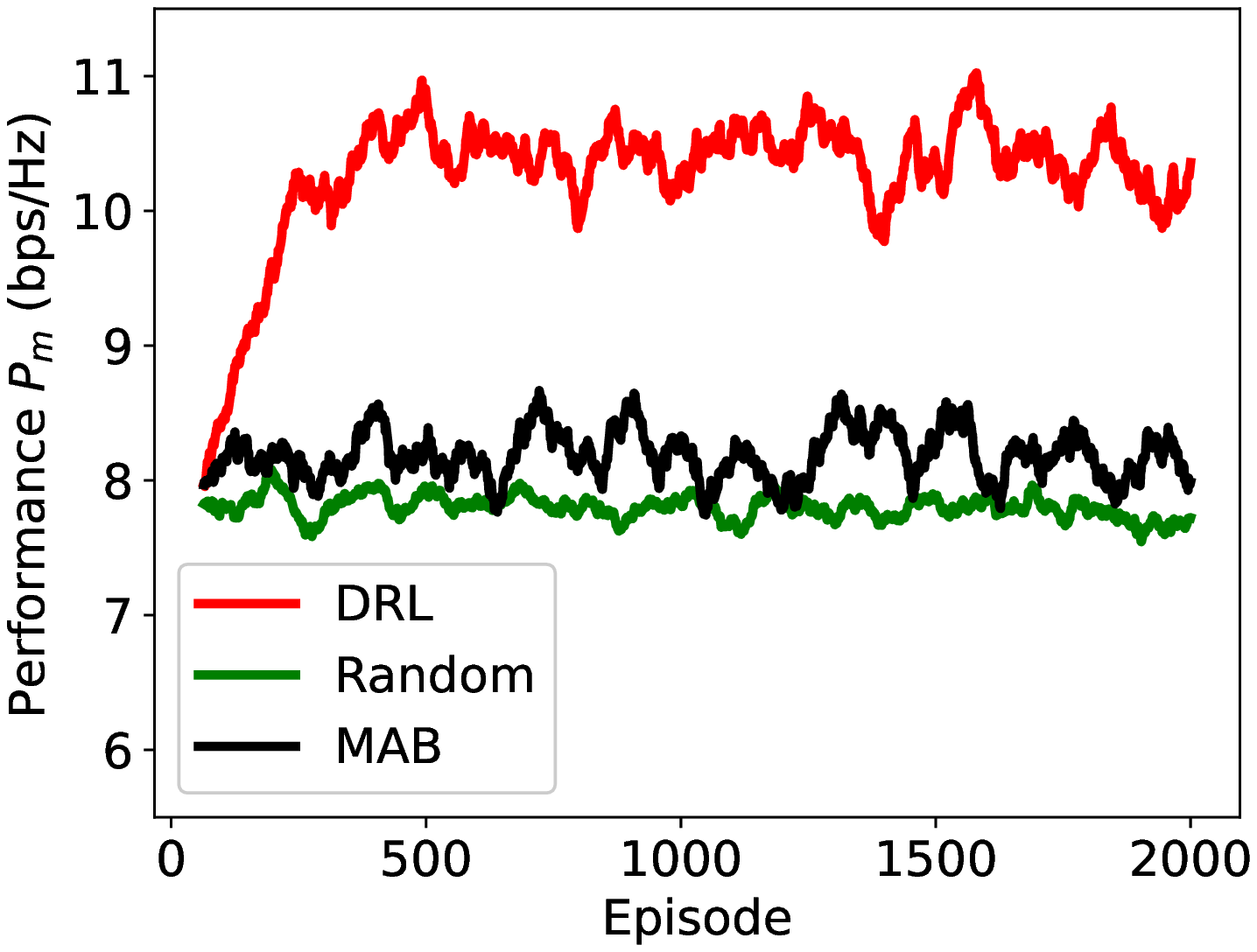}
\subcaption{$K_{Rician}=5$}
\label{SpectralEfficiencySNR1}
\end{minipage}
\begin{minipage}[!h]{.48\linewidth}
\centering
\includegraphics[width=1.1\textwidth]{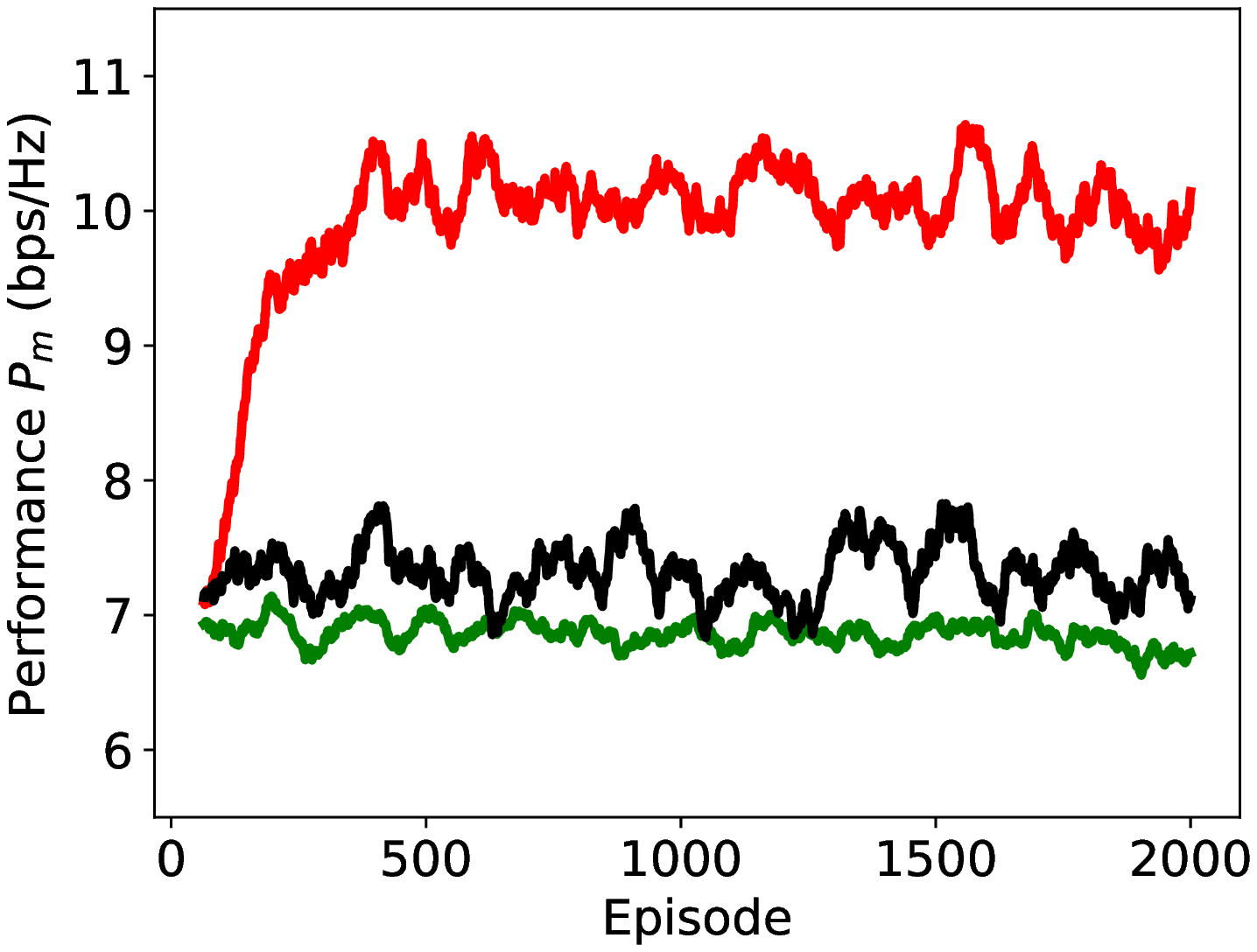}
\subcaption{$K_{Rician}=10$}
\label{SpectralEfficiencySNR6}
\end{minipage}
\begin{minipage}[!h]{.48\linewidth}
\centering
\includegraphics[width=1.1\textwidth]{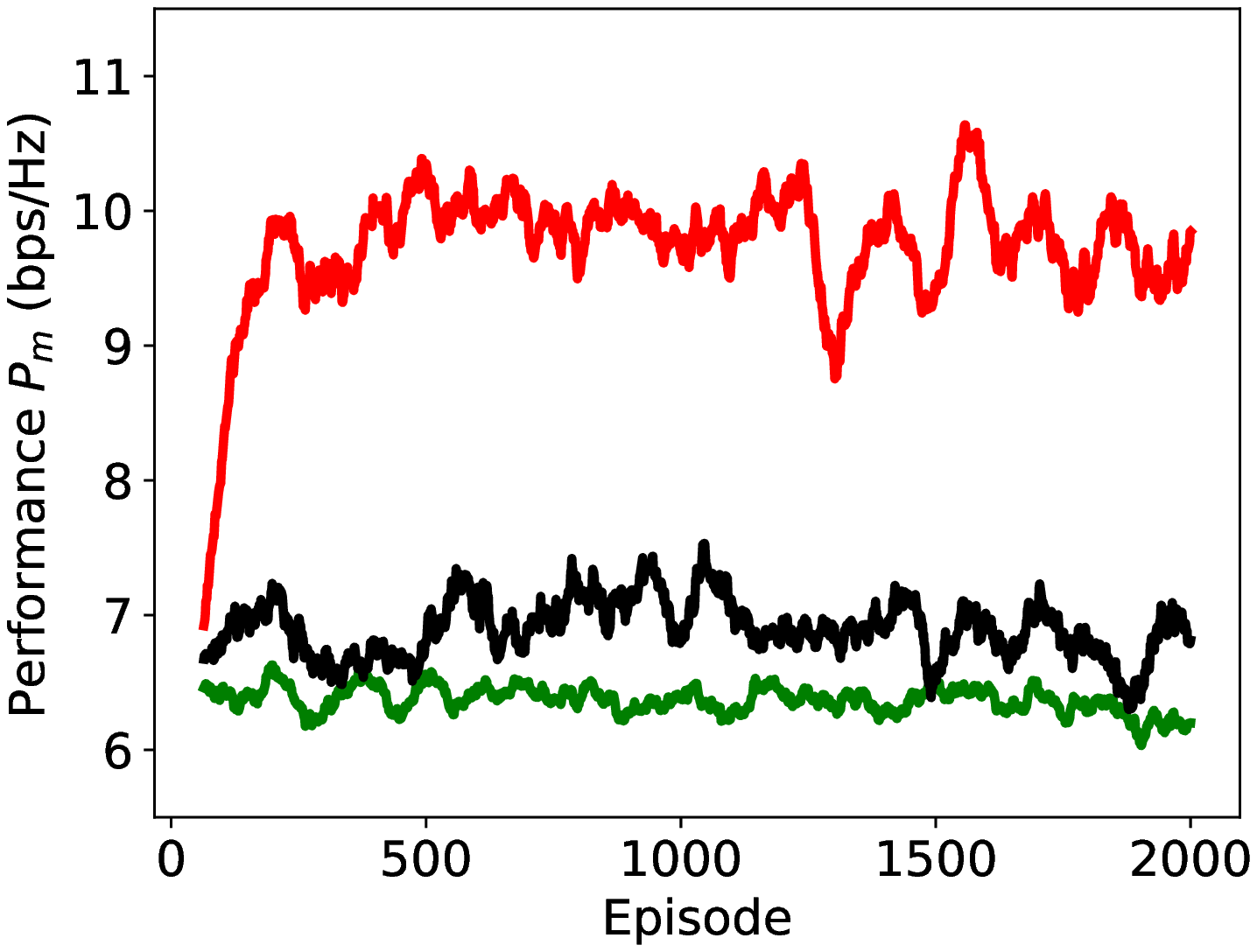}
\subcaption{$K_{Rician}=15$}
\label{SpectralEfficiencySNR11}
\end{minipage}
\hspace{0.25cm}
\begin{minipage}[!h]{.48\linewidth}
\centering
\includegraphics[width=1.1\textwidth]{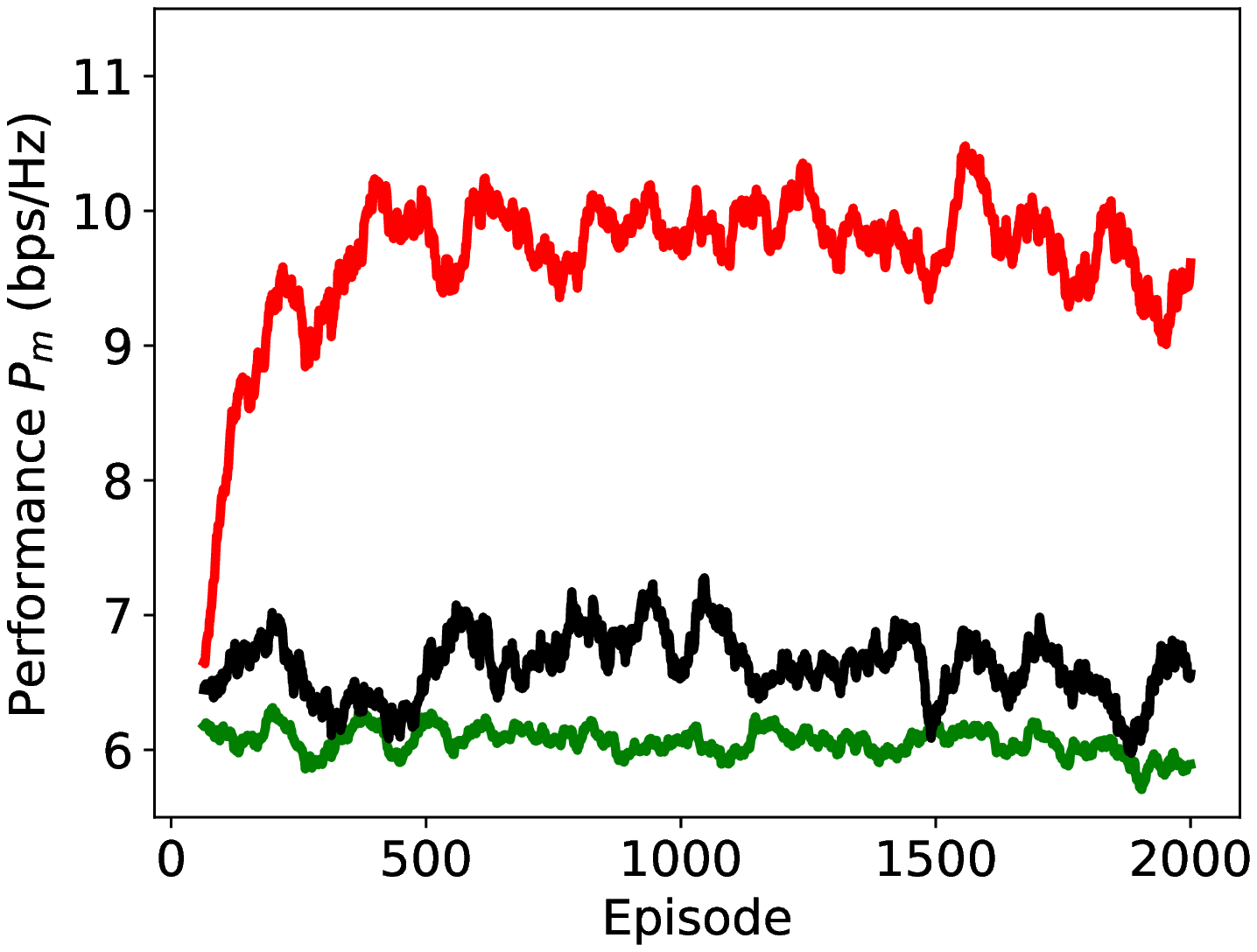}
\subcaption{$K_{Rician}=20$}
\label{SpectralEfficiencySNR16}
\end{minipage}
\caption{Moving average of $P_m$ for DRL under different values of Rician factor} \label{RicianFactor}
\end{figure}

\subsection{Performance Study of DRL}

\begin{figure}[t]
\begin{minipage}[!h]{1\linewidth}
\centering
\includegraphics[width=0.9\textwidth]{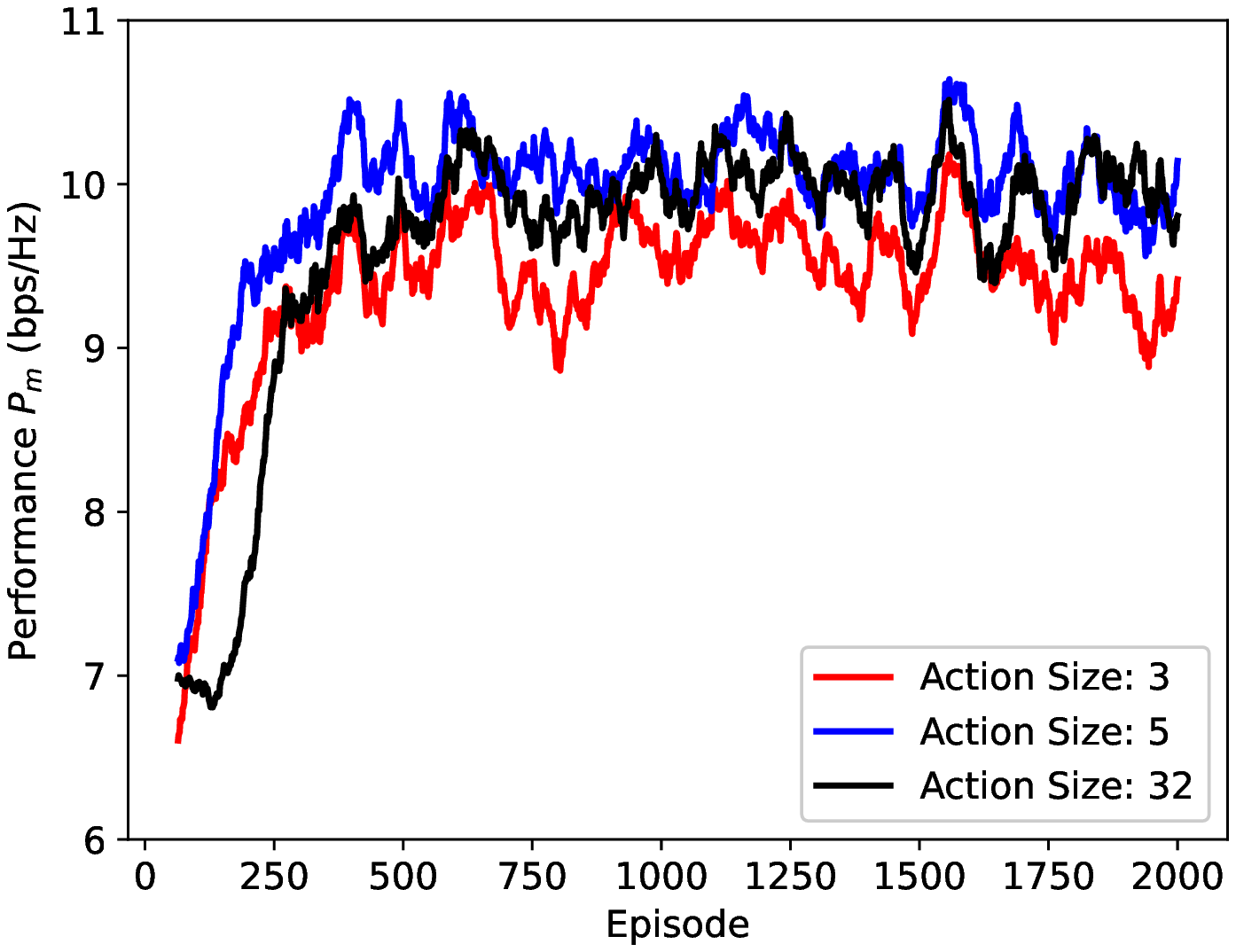}
\subcaption{Moving average of $P_m$ for DRL }
\label{ActionComp}
\end{minipage}
\begin{minipage}[!h]{1\linewidth}
\centering
\includegraphics[width=0.9\textwidth]{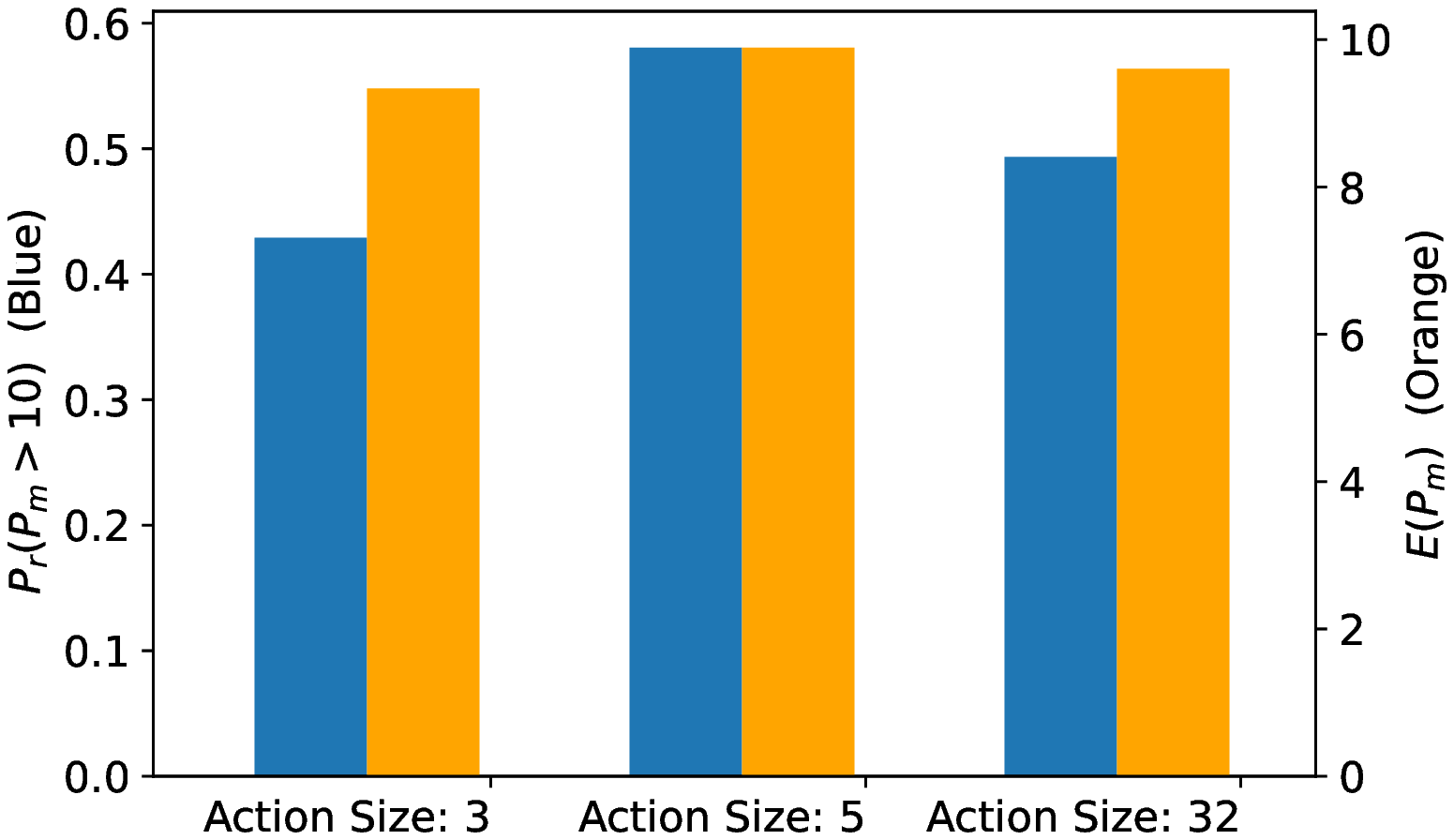}
\vspace{-.8cm}
\subcaption{Average sum rate $\mathbb{E}(P_m)$ (orange color) and the probability $P_r(P_m > 10)$ (blue color)}
\label{AveOutage}
\end{minipage}
\caption{Performance  of DRL with different action sets } \label{ActionSize}
\end{figure}

In \figref{RicianFactor}, we study the performance of the proposed DRL scheme (with the action set $\mathcal{A}_5$) under different values of Rician factor. The x-axis represents the episode, and the y-axis represents the moving average of $P_m$ (window length is $64$).  As can be seen, the proposed DRL significant outperforms the benchmark schemes, i.e., random reflection and multi-armed bandit (MAB). Different from DRL, the actions of random reflection and MAB that we used are absolute phase shift, and the action set is the DFT vectors. Although all the three schemes are independent of the sub-channel CSI, their utilizations of the other information are different. Random reflection is independent of any information and undoubtedly  achieves the worst performance; MAB assumes a fixed distribution of \emph{rewards} and explores the reward distributions of all arms. However, MAB fails to describe the state of the environment and to build the connection between the action and the environment; DRL defines an appropriate \emph{state} to represent the agent's ``position" within the environment and learns the quality of a state-action combination using the DQN from the information of \emph{rewards} and \emph{states}, which enables the agent to choose the best action to maximize the returns. We can also find that the performance gap of DRL and the benchmarks schemes becomes larger with the increase of Rician factor $K_{Rician}$, which indicates that  the effectiveness of DRL is also dependent on the radio environment.

In \figref{ActionSize}, we study the impacts of action size to the performance of the proposed DRL scheme when the Rician factor is $K=10$. In addition to the action set  $\mathcal{A}_5 = \left\{\mathbf{v}(-\frac{6}{N_R}), \mathbf{v}(-\frac{2}{N_R}), \mathbf{v}(0), \mathbf{v}(\frac{2}{N_R}), \mathbf{v}(\frac{6}{N_R})\right\}$  defined in Section III, we adopt the action set  $\mathcal{A}_3 = \left\{ \mathbf{v}(-\frac{2}{N_R}), \mathbf{v}(0), \mathbf{v}(\frac{2}{N_R})\right\}$ and action set $\mathcal{A}_{32} = \left\{ \mathbf{v}(-1),  \mathbf{v}(-1-\frac{2}{N_R}), \cdots, \mathbf{v}(1-\frac{2}{N_R})\right\}$ (namely, DFT matrix) as the benchmarks. From \figref{ActionComp}, we can see that $\mathcal{A}_5$ is the fastest to converge, while $\mathcal{A}_{32}$ is the slowest.  Although a large action size will speed up the response rate of the agent, it will, on the other hand, demand more time for the DQN to converge. In the convergence region, we find that $\mathcal{A}_5$ and $\mathcal{A}_{32}$ achieve the similar performance, while $\mathcal{A}_{3}$'s performance is inferior.  It indicates that a well-design action set with a moderate size might be better than the small action size and the over-large action size. In \figref{AveOutage}, the average sum rate $\mathbb{E}(P_m)$, and the probability $P_r(P_m >10)$ are presented in the bar chart. For  $\mathcal{A}_5$,   $\mathbb{E}(P_m) = 9.89$ bps/Hz and $P_r(P_m>10) = 58.05\%$; for $\mathcal{A}_3$, $\mathbb{E}(P_m) = 9.34$ bps/Hz and $P_r(P_m>10) = 42.9\%$;  for $\mathcal{A}_{32}$, $\mathbb{E}(P_m) = 9.60$ bps/Hz and $P_r(P_m>10) = 49.35\%$. It further verifies the importance of action set design.

\subsection{Performance Study of ESC}

\begin{figure}[tp]{
\begin{center}{\includegraphics[ height= 6cm]{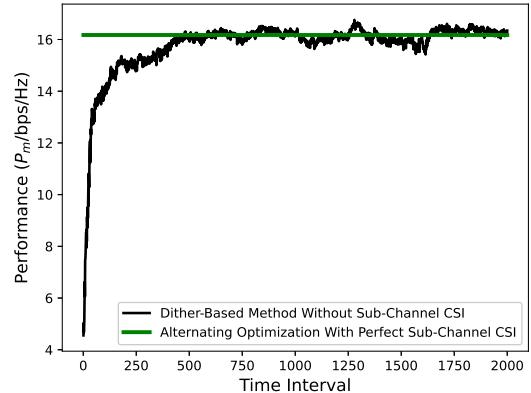}}
\caption{Performance of ESC-inspired dither-based iterative method}\label{ESCComparison10}
\end{center}}
\end{figure}

In \figref{ESCComparison10}, we study the performance of the ESC-inspired dither-based iterative method in a specific channel block. Each time interval of $x$-axis consists of $K$ time slots, which is the pilot length required by the BS to estimate the multi-user channel $\mathbf{H}$.  As can be seen that the performance metric $P_m$ for dither-based method is almost monotonically increasing over time. Note that the performance metric used to guide the dither-based iterative method is an approximation, rather than the authentic feedback. Thus, the slight fluctuation of the performance curve is reasonable. For the purpose of comparison, we adopt the model-based method as the benchmark, in which the perfect sub-channel CSI, namely, $\mathbf{H}_{BU}$, $\mathbf{H}_{BR}$, and $\mathbf{H}_{RU}$, is available. The optimal reflection coefficient vector $\boldsymbol{\theta}$ is derived through solving the optimization problem \eqref{Opt}. Recall that the difference from model-free control is that the exact relationship between the objective function $P_m$ and the variable $\boldsymbol{\theta}$ is known in model-based methods. Due to the discrete nature of the feasible region $\mathcal{B}$, the optimization problem is intractable. Hence, we manage to solve it using the alternating optimization technique, which alternatively freezes $N_R-1$ reflection coefficients and optimize only $1$ reflection coefficient. According to \figref{ESCComparison10}, the model-free dither-based method achieves almost the same performance as the model-based alternating optimization when the time index is greater than $500$.  However, in practice, we have to weigh the cost of time resources against the benefits. As the dither-based method needs to sample $\hat{\mathbf{H}}$, one iteration means the cost of a unit of time resource in wireless communications. Thus, in order to balance the time allocation between pilot transmission and data transmission, the time resources dedicated to dither-based method (i.e., the time for pilot transmission) should be deliberately selected according to channel dynamics (i.e., the length of a channel block).

\subsection{Performance Study of the Integrated DRL and ESC}

\begin{figure}[t]
\begin{minipage}[!h]{.48\linewidth}
\centering
\includegraphics[width=1.1\textwidth]{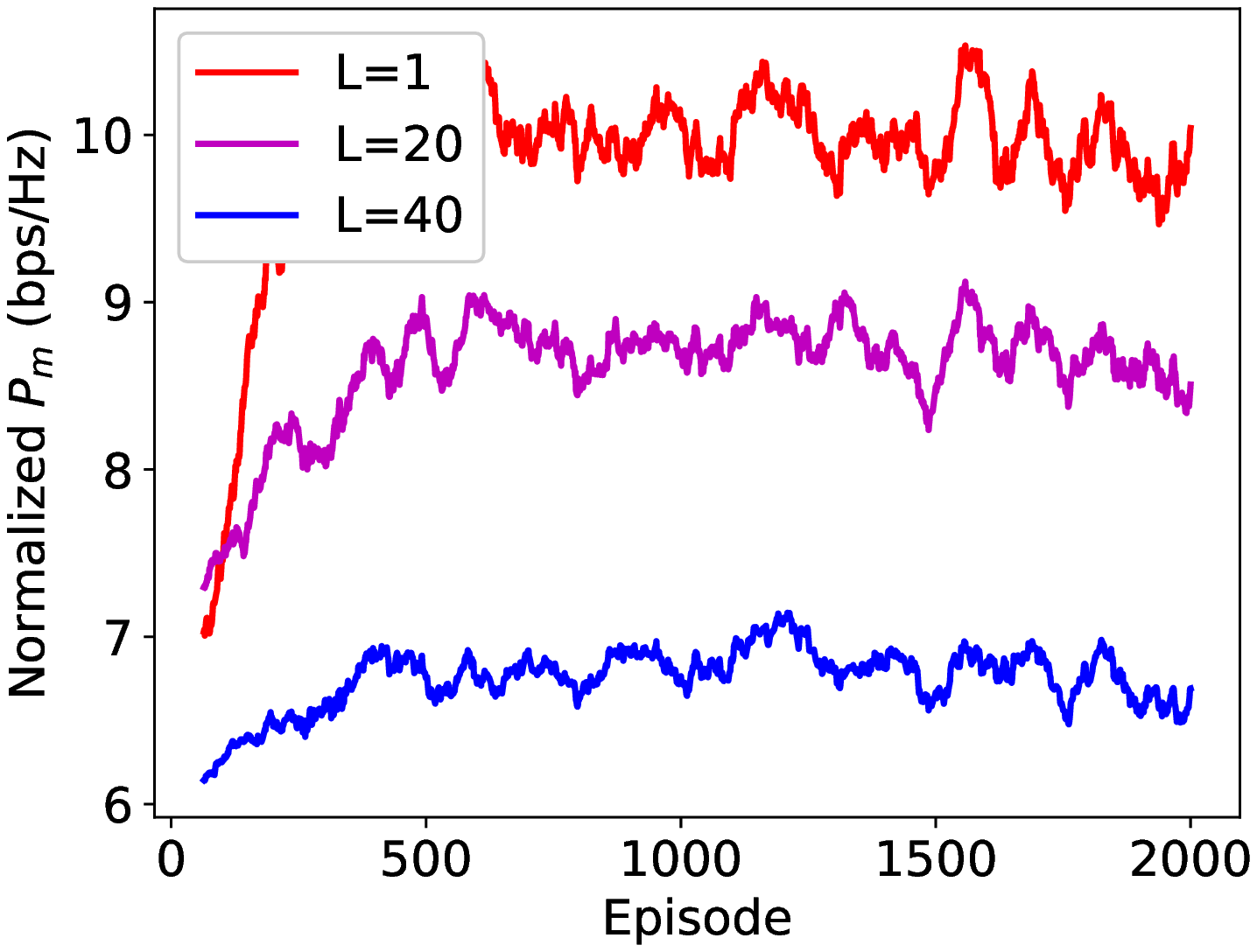}
\subcaption{ $L_B = 100$}
\label{SpectralEfficiencySNR1}
\end{minipage}
\begin{minipage}[!h]{.48\linewidth}
\centering
\includegraphics[width=1.1\textwidth]{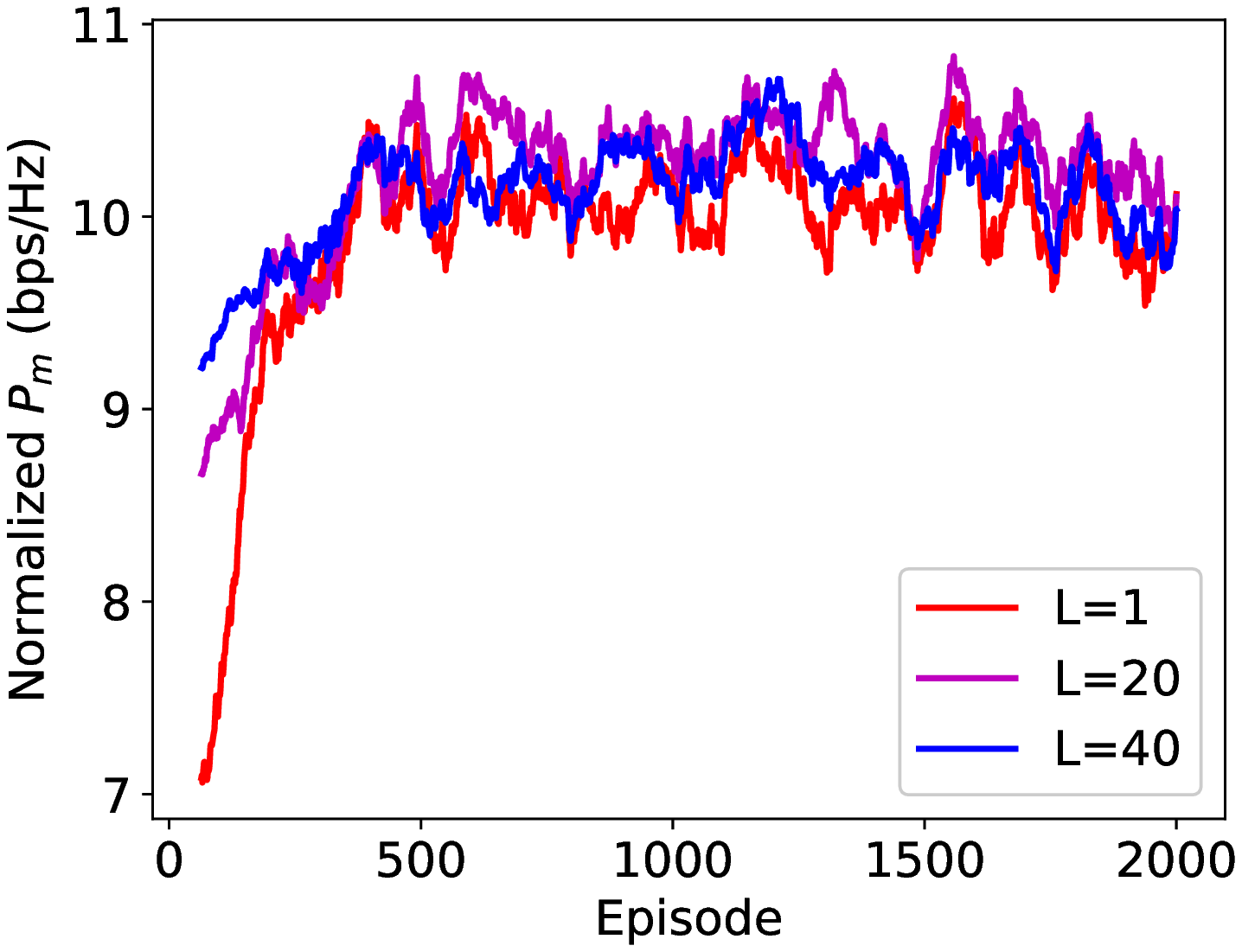}
\subcaption{ $L_B = 400$}
\label{SpectralEfficiencySNR6}
\end{minipage}
\begin{minipage}[!h]{.48\linewidth}
\centering
\includegraphics[width=1.1\textwidth]{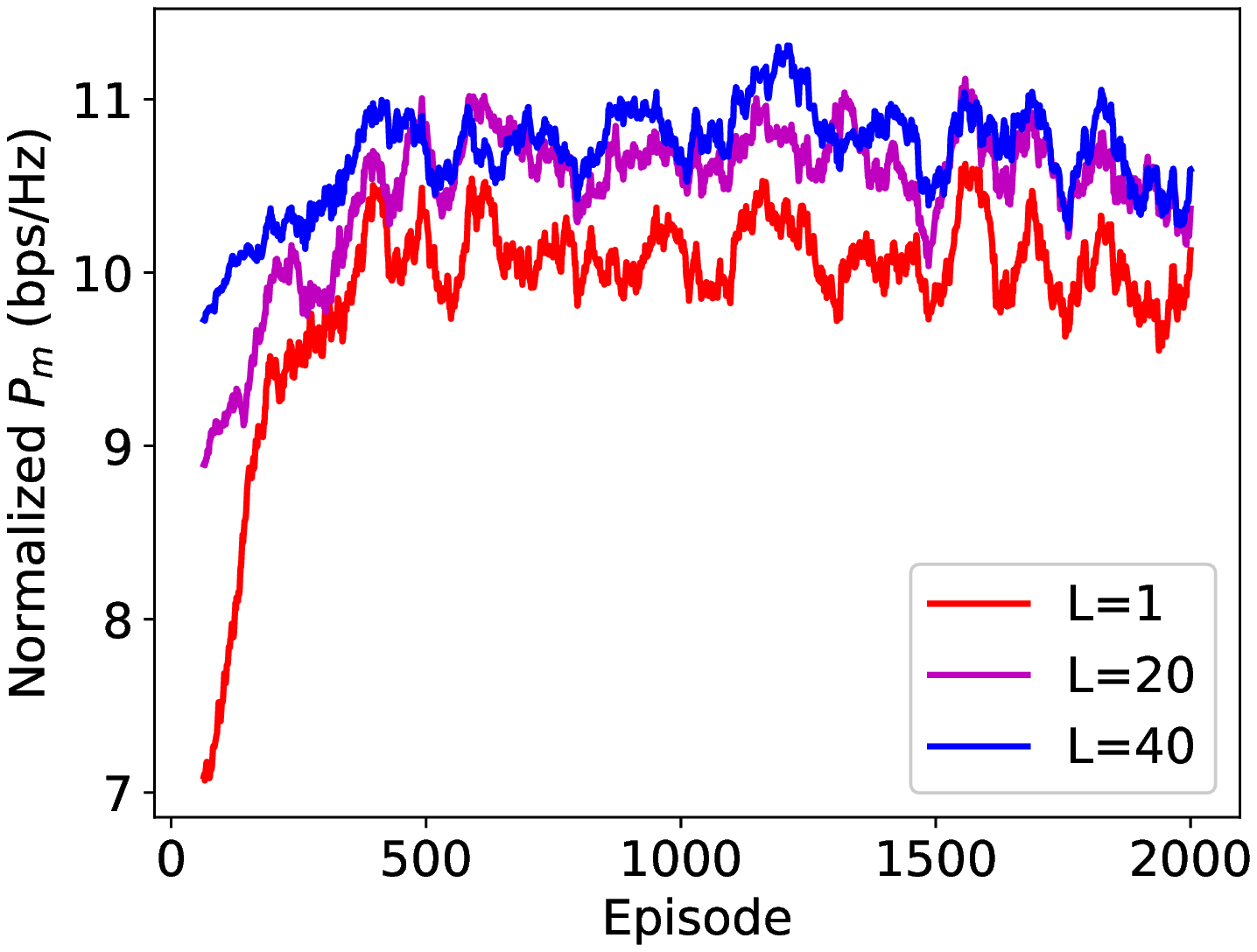}
\subcaption{$L_B = 800$}
\label{SpectralEfficiencySNR11}
\end{minipage}
\hspace{0.25cm}
\begin{minipage}[!h]{.48\linewidth}
\centering
\includegraphics[width=1.1\textwidth]{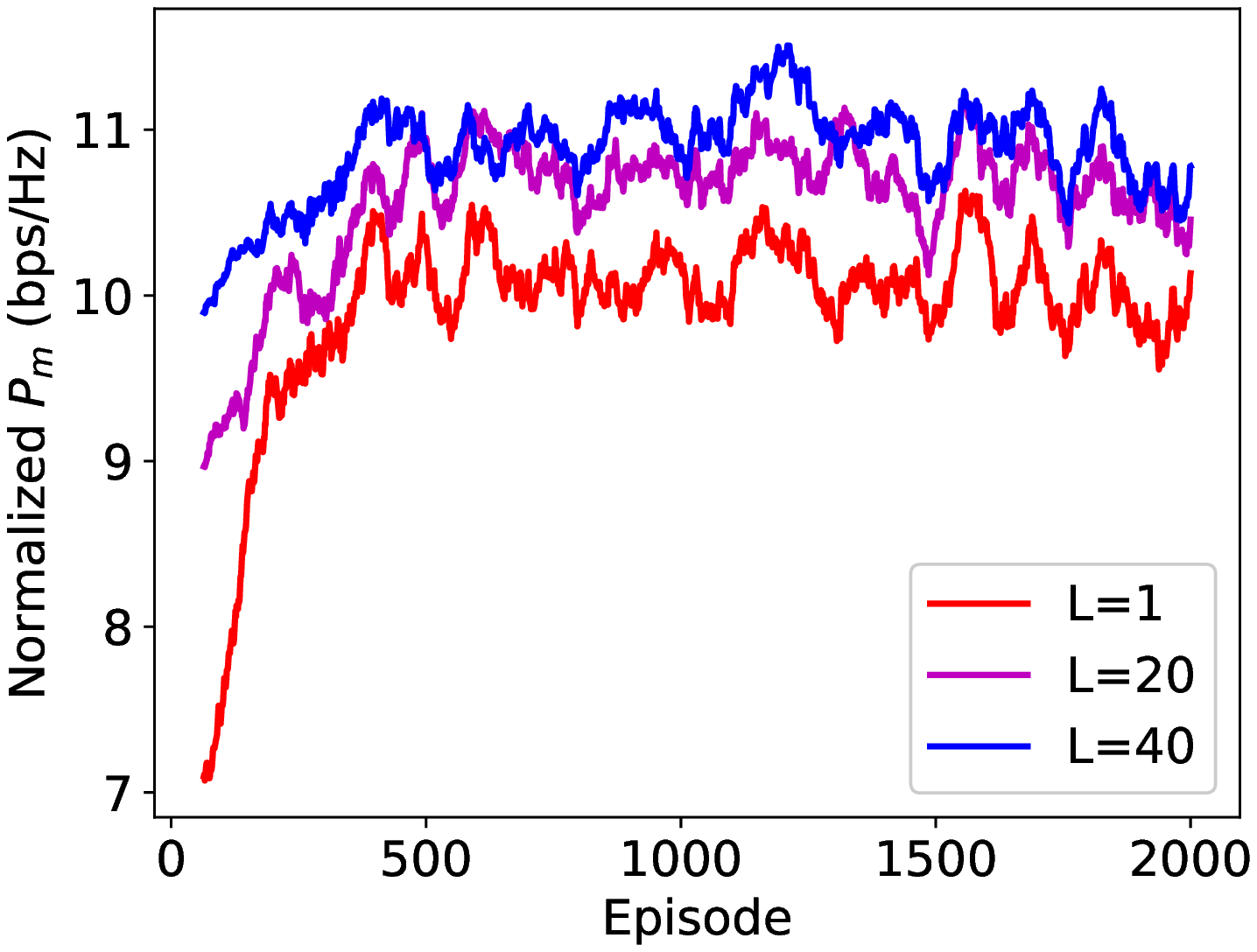}
\subcaption{$L_B = 1200$}
\label{SpectralEfficiencySNR16}
\end{minipage}
\caption{Moving average of the normalized $P_m$ for the integrated DRL and ESC in different channel dynamics} \label{IntegratedDRLESC}
\end{figure}

In \figref{IntegratedDRLESC}, we study the performance of the integrated DRL and ESC method  (with the action set $\mathcal{A}_5$) in different channel dynamics. The normalized  $P_m$ is the obtained by multiplying $P_m$ by the coefficient $\frac{L_B-L}{L_B}$, where $L_B$ is the channel block length and $L$ is the training length. Take $L_B = 100$ and $L=1$ as an example, the first $L=1$ time interval is used for pilot transmission and the rest $L_B-L=99$ time intervals are used for data transmission.
{It is also noteworthy that when $L=1$ the scheme is DRL only, and when $L\geq 2$ the scheme is the integrated DRL and ESC.}
In addition, we adopt the action set  $\mathcal{A}_5$.
From the figure, we can see that when the channel block length $L_B=100$, DRL outperforms the integrated DRL and ESC with $L=20, 40$. However,  when the channel block length increases, the integrated DRL and ESC gradually becomes superior, which verifies its effectiveness in the slow fading channel.  Therefore, the parameter $L$ of the integrated DRL and ESC can be set adaptively to accommodate different channel dynamics.


\section{Conclusion}

In this paper, we have proposed a model-free control of IRS that is independent of sub-channel CSI.  We firstly model the control of IRS as an MDP and  apply DRL to perform real-time coarse phase control of IRS. Then, we apply ESC as the fine phase control of IRS.  Finally, by updating the frame structure, we integrate DRL and ESC in the model-free control of IRS to improve its adaptivity to different channel dynamics. Numerical results show the superiority of our proposed  scheme in model-free IRS control without sub-channel CSI.

\bibliographystyle{IEEEtran}%

\bibliography{bibfile}

\end{document}